\documentclass[11pt, a4paper]{article}

\usepackage[utf8]{inputenc}
\usepackage[T1]{fontenc}
\usepackage[UKenglish]{babel}
\usepackage{amsmath, amsthm, amssymb}
\usepackage{algorithm, algorithmic}
\usepackage{graphicx}
\usepackage{geometry}
\usepackage[colorlinks=true, allcolors=blue]{hyperref}
\usepackage[nameinlink, capitalize]{cleveref} 
\usepackage{authblk} 
\usepackage[colorlinks=true, allcolors=blue]{hyperref}

\usepackage{orcidlink}

\newtheorem{theorem}{Theorem}
\newtheorem{lemma}[theorem]{Lemma}

\newtheorem{observation}[theorem]{Observation}
\theoremstyle{definition}
\newtheorem{definition}{Definition}
\newtheorem{problem}{Problem}
\newtheorem{remark}{Remark}

\title{Set Parameterized Matching via Multi-Layer Hashing}

\author[1]{Moshe Lewenstein\orcidlink{0000-0002-8272-244X}\thanks{Work on this paper was supported by ISF grant \# 2760/25 awarded by the Israel Science Foundation.}}
\author[1]{Ely Porat \orcidlink{0000-0001-6912-5766}}
\affil[1]{Bar-Ilan University, Israel\\ \texttt{moshe.lewenstein@biu.ac.il, porately@cs.biu.ac.il}}

\date{} 

\begin{document}

\maketitle

\begin{abstract}
We study the \emph{set parameterized matching} problem, a generalization of the classical parameterized matching problem introduced by Baker~\cite{baker1993theory,baker1997parameterized}.
In set parameterized matching, both the pattern and text are sequences where each position contains a \emph{set} of characters rather than a single character.
Two set-strings parameterized match if there exists a bijection between their alphabets that maps one to the other set-wise.
Boussidan~\cite{Boussidan25} introduced this problem for the case of equal-length set-strings.
We present a randomized algorithm running in $O(N + M)$ time with high probability, where $N$ is the text size and $M$ is the pattern size.
Our approach employs a novel three-layer hashing scheme based on Karp-Rabin fingerprinting that addresses the challenges of (1) the size blowup in representations of the problem, (2) set-to-set matching, and (3) the dynamic nature of encodings of text substrings during pattern scanning.
\end{abstract}

\newpage
\section{Introduction}

The parameterized pattern matching problem, introduced by Baker~\cite{baker1992program,baker1993theory,baker1997parameterized}, seeks all occurrences of a pattern $P[1..m]$ in a text $T[1..n]$ where an occurrence is defined not by exact character matches but by consistent character renaming.
Formally, the pattern occurs at position $i$ if there exists a bijection $\pi$ from the pattern alphabet to the text alphabet such that $T[i+j-1] = \pi(P[j])$ for all $j \in [1..m]$.
This problem has found numerous applications in software maintenance~\cite{baker1992program}, plagiarism detection, computational biology~\cite{Shibuya04}, and beyond~\cite{mendivelso2020brief}.

\subsection{Prior Work}

The study of \emph{parameterized matching} was pioneered by Baker~\cite{baker1992program, baker1993theory}, who introduced the concept to identify duplicated code in large software systems.
The core innovation was the ``predecessor string'' (or \emph{prev}) representation, which converts a string into an integer sequence where each character is replaced by the offset to its previous occurrence.
A key observation is that two strings parameterized match if and only if their \emph{prev} transformations are identical.
This insight allowed for efficient matching regardless of specific variable names, provided the usage patterns were consistent.
This discovery enabled $O(n \log \min(m,|\Sigma_P|))$-time algorithms~\cite{amir1994alphabet} and led to the development of parameterized suffix trees~\cite{baker1997parameterized} and suffix arrays~\cite{deguchi2008parameterized}.
Subsequent research has explored the theoretical boundaries and practical variations of the problem:

\begin{itemize}
\item \textbf{Approximate matching:} Significant work has been done on matching with $k$ mismatches~\cite{apostolico2007parameterized, hazay2007approximate}, edit distance~\cite{baker1999parameterized,apostolico2008periodicity}, lcs~\cite{keller2009longest} and $\delta\gamma$-distances~\cite{lee2008delta}.
\item \textbf{Compressed matching:} Solutions have been developed for searching in compressed text~\cite{garg2014fast}, run-length encoded strings~\cite{apostolico2012parameterized}, and other compression schemes~\cite{BealA16,khetan2016efficient}.
\item \textbf{Higher dimensions and Structures:} The problem has been extended beyond linear strings to matching on trees~\cite{amir2009parameterized} and two-dimensional grids~\cite{cole2014two}.
\item \textbf{Succinct data structures:} Recent developments include space-efficient dictionaries~\cite{ganguly2016space,ganguly2017pbwt} and the parameterized Burrows-Wheeler Transform (pBWT)~\cite{ganguly2017pbwt}.
\end{itemize}

For a comprehensive overview of the field's evolution, Lewenstein~\cite{lewenstein2008parameterized} and Mendivelso et al.~\cite{mendivelso2020brief} provide a detailed historical account of these problems and their various applications.

\subsection{Set Parameterized Matching}

While classical parameterized matching handles sequences of single characters, many real-world scenarios involve ambiguity or multiple valid interpretations at a single position.
Boussidan~\cite{Boussidan25} in his PhD thesis (see also~\cite{BourhisBG23}) was concerned about comparing 18th century French plays, where the characters of the play are important, but each scene has multiple characters within.
To address this, Boussidan extended the model of parameterized matching to \emph{set parameterized comparison}, where for two equal-length strings, each position in both contains a \emph{set} of characters, and the question is whether they parameterize match. We call these strings containing sets, {\em set-strings}.
Boussidan's original algorithm relied on bipartite matching. This approach is computationally expensive, requiring $O(m \cdot |\Sigma|^{2.5})$ time for comparison alone.
In this paper, we consider the full pattern matching version of the problem:

\begin{problem}[Set Parameterized Matching]
Given a text $\mathcal{T} = \mathcal{T}[1..n]$ and a pattern $\mathcal{P} = \mathcal{P}[1..m]$ where each $\mathcal{T}[i], \mathcal{P}[j] \subseteq \Sigma$, report all positions $i \in [1..n-m+1]$ such that $\mathcal{P}$ set-parameterized matches $\mathcal{T}[i..i+m-1]$.
\end{problem}

\subsection{Our Contributions}

We present a linear time randomized algorithm for set parameterized matching which is novel in its solution approach.
This even dramatically improves the previous bipartite matching approaches for the special case of set parameterized comparison.
\begin{theorem}[Main Result]\label{thm:main}
Given a set-string text $\mathcal{T}[1..n]$ and set-string pattern $\mathcal{P}[1..m]$ over alphabet $\Sigma$, where $N = \sum_{i=1}^{n} |\mathcal{T}[i]|$ and $M = \sum_{j=1}^{m} |\mathcal{P}[j]|$, we can report all set-parameterized matches in time $O(N + M)$ with high probability using a Monte Carlo randomized algorithm with error probability at most $1/n$.
\end{theorem}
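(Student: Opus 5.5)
The plan is to reduce set-parameterized equality to an equality of multisets of ``occurrence profiles'', and then to hash those multisets so that the hash can be slid along the text in amortized $O(1)$ time per character occurrence.

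\textbf{Step 1 (characterization).} For a set-string window $W[1..m]$ and a character $c$ occurring in it, let $\mathrm{prof}_W(c)=\{j : c\in W[j]\}\subseteq[1..m]$. The claim is that $\mathcal{P}$ set-parameterized matches $W$ iff the multisets $\{\mathrm{prof}_{\mathcal{P}}(c)\}_c$ and $\{\mathrm{prof}_W(d)\}_d$ are equal, where $c$ and $d$ range over the characters occurring in $\mathcal{P}$ and in $W$, respectively.
\begin{itemize}
\item If $\pi$ is a bijection with $\pi(\mathcal{P}[j])=W[j]$ for all $j$, then $\pi$ preserves profiles, so the multisets are equal.
\item Conversely, if the multisets are equal, pair up characters with identical profiles. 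Any such pairing $\pi$ gives $d\in\pi(\mathcal{P}[j])$ iff $j\in\mathrm{prof}(\pi^{-1}d)=\mathrm{prof}(d)$ iff $d\in W[j]$. Hence $\pi(\mathcal{P}[j])=W[j]$ for all $j$.
\end{itemize}
This removes any need for bipartite matching.

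\textbf{Step 2 (two hashing layers).} Work in $\mathbb{F}_p$ with prime $p=\Theta(n^{c})$ and random $r,z\in\mathbb{F}_p$.
\begin{itemize}
\item \emph{Inner layer.} For each character $d$ in the text, define the absolute fingerprint $H_i(d)=\sum_{j\in[i..i+m-1],\,d\in\mathcal{T}[j]} r^{j}$. The window-relative fingerprint is then $h_i(d)=r^{-i}H_i(d)$, which equals $\sum_{j\in \mathrm{prof}(d)} r^{j}$ up to the fixed shift.
\item \emph{Outer layer.} Combine with a function that is compatible with the uniform rescaling $h\mapsto r^{-1}h$ that happens at every shift. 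I would first try $\Phi_i=\sum_d 1/H_i(d)$, which satisfies $\sum_d 1/h_i(d)=r^{i}\Phi_i$. To make the outer layer robust I would instead use a third random layer, for example $\sum_d (H_i(d)+z r^{i})^{-1}$ written as $r^{-i}\sum_d (h_i(d)+z)^{-1}$, and handle the $r^{i}$ factor by maintaining it explicitly.
\end{itemize}

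\textbf{Step 3 (sliding in linear time).} Moving from window $i$ to window $i+1$ changes $H(d)$ only for characters $d\in\mathcal{T}[i]\cup\mathcal{T}[i+m]$. For each such character we subtract its old contribution to $\Phi$ and add the new one. Inverses in $\mathbb{F}_p$ can be batched, or precomputed via Montgomery's trick over all $O(N)$ update values. This gives $O(N)$ total work, and the pattern is hashed once in $O(M)$ time. A match is reported at $i$ when the window's outer hash, renormalized by the known power of $r$, equals the pattern's hash. We also require equal numbers of distinct characters, which is maintained by counters.

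\textbf{Step 4 (error analysis), the main obstacle.} I must show that if the profile multisets differ, then the difference of outer hashes is a nonzero rational function in $(r,z)$. Clearing denominators should give a nonzero polynomial of degree $\mathrm{poly}(m)$, and Schwartz--Zippel together with a union bound over $n$ windows then gives error at most $1/n$ for $p\ge n^{c}$.

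Nonvanishing relies on the inner fingerprints being distinct polynomials in $r$, namely $0/1$-coefficient polynomials, for distinct profiles. Then the terms $(h+z)^{-1}$ have distinct poles in $z$ and cannot cancel unless the multiplicities agree. The delicate points are two. First, keeping the $z$-shift compatible with the $r^{-i}$ renormalization while updating in $O(1)$ time. Second, bounding the probability that some $H(d)+zr^{i}$ is zero, which would make the inverse undefined.

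If the $z$-shift cannot be made shift-compatible, my fallback is to anchor fingerprints at block boundaries. I would cut the text into blocks of length $m$ and recompute the anchor once per block, at amortized $O(N/m\cdot m)$ cost.
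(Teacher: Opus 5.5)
\textbf{There is a genuine gap.} Your Step~1 is correct, and it is more direct than the paper's inductive lemma: the multiset of window profiles is exactly the multiset of (first occurrence, offset set) pairs that the paper reconstructs. The gap is where the linear running time has to come from. You never produce an outer multiset hash that is both sound and compatible with the uniform rescaling $h\mapsto r^{-1}h$ at each slide.

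\emph{First candidate.} $\sum_d 1/h(d)$ is rescaling-compatible but unsound as an identity of rational functions, so no choice of $r$ rescues it. Take the pattern $[\{a\},\{a,b\},\{b\},\{c\},\emptyset]$ and the window $[\emptyset,\{d\},\{e\},\{e,f\},\{f\}]$. The profile polynomials are $1+x,\ x+x^2,\ x^3$ versus $x,\ x^2+x^3,\ x^3+x^4$, and
\[
\frac{1}{1+x}+\frac{1}{x+x^2}+\frac{1}{x^3}=\frac{1}{x}+\frac{1}{x^3}=\frac{1}{x}+\frac{1}{x^2+x^3}+\frac{1}{x^3+x^4}.
\]
Each side has three distinct characters and five occurrences, so your counters do not catch it either.

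\emph{Second candidate.} $\sum_d (H_i(d)+zr^i)^{-1}$ is sound but not slide-compatible. A stable character keeps $H(d)$, yet its term changes from $(H(d)+zr^{i})^{-1}$ to $(H(d)+zr^{i+1})^{-1}$. This is not a common multiple of the old term; in relative form, $(r^{-1}h+z)^{-1}=r\,(h+rz)^{-1}$. So every character in the window must be touched at every slide.

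\emph{Block-anchoring fallback.} This only relocates the difficulty. With anchor $b$ and $i=b+s$, the window hash is $r^{-b}\sum_d (r^{s}h_i(d)+z)^{-1}$. The pattern side would then need $\sum_c (r^{s}h_{\mathcal{P}}(c)+z)^{-1}$ for every $s\in[0,m)$. That costs $\Theta(m)$ per distinct pattern character, up to $\Theta(m^2)$, which is not $O(N+M)$.

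\textbf{The missing idea.} This is the core of the paper's Section~4: split each character's profile into a translation-invariant \emph{shape} and a \emph{placement}, and apply the nonlinear random map only to the shape.
\begin{itemize}
\item The paper's shape is the offset set relative to the character's first occurrence in the current window, with fingerprint $\psi(\sigma)=\sum_{o\in O}r_1^{o}$. It changes only when $\sigma$ is leaving or entering, and then in $O(1)$ time via $(\psi-1)\,r_1^{-\delta}$ or $\psi+r_1^{\delta_{new}}$.
\item Placement enters linearly through $P_i(\sigma)=\sum_k r_3^{k}$. Hence in $H_i=\sum_\sigma r_2^{\psi^{(i)}[\sigma]}P_i(\sigma)$ the stable part rescales by the single factor $r_3^{-1}$.
\item This is what lets isolate/shift/reintegrate run in $O(1+|\mathcal{T}[i]|+|\mathcal{T}[i+m]|)$ per slide.
\end{itemize}
In your notation, a hash such as $\sum_d (\psi_i(d)+z)^{-1}\,y^{s_d-i}$ would close the gap. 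Here $s_d$ is the first occurrence of $d$ in the window and $y$ is independent of $z$. For stable $d$, both $\psi_i(d)$ and $s_d$ are unchanged, so the stable part rescales uniformly by $y^{-1}$. By your own Step~1, the multiset of pairs $(s_d-i,\mathrm{OffSet}(d))$ decides the match.
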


Key to our result is the introduction of a new structural definition we term the \emph{multiset offset representation}.
Unlike classical predecessor pointers which capture only local history, this representation preserves the global distribution of every character, ensuring that the necessary bijection properties hold even when multiple characters share identical positions.
To process this representation, which can be quadratically large in the worst case, we employ a novel three-layer hashing scheme:
\begin{enumerate}
\item \textbf{Layer 1}: Hash the offset set of each character to a single integer using Karp-Rabin fingerprinting, compressing the interaction history.
\item \textbf{Layer 2}: Hash the \emph{multiset} of these values at each position to a single integer ("mashing"), enabling efficient set-to-set comparisons that respect multiplicities.
\item \textbf{Layer 3}: Apply a rolling Karp-Rabin hash to the sequence of mashed values to enable linear time processing of sliding windows.
\end{enumerate}


As a warmup to the main result, we also demonstrate an efficient solution for the comparison problem:

\begin{theorem}[Set-String Comparison]\label{thm:comparison}
Given two set-strings $\mathcal{S}_1[1..m]$ and $\mathcal{S}_2[1..m]$ over alphabet $\Sigma$, where $M_{\mathcal{S}_1} = \sum_{i=1}^{m} |\mathcal{S}_1[i]|$ and $M_{\mathcal{S}_2} = \sum_{i=1}^{m} |\mathcal{S}_2[i]|$, we can determine whether they set-parameterized match in time $O(M_{\mathcal{S}_1} + M_{\mathcal{S}_2})$ with error probability at most $1/m$.
\end{theorem}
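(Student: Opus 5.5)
The plan is to reduce set-parameterized matching of $\mathcal{S}_1$ and $\mathcal{S}_2$ to equality of a canonical, renaming-invariant object, and then to compress that object with the first two hashing layers. For a set-string $\mathcal{S}[1..m]$ and a character $c$, let $\mathrm{occ}_{\mathcal{S}}(c)=\{i : c\in\mathcal{S}[i]\}$. A bijection $\pi$ maps $\mathcal{S}_1$ to $\mathcal{S}_2$ set-wise exactly when $\mathrm{occ}_{\mathcal{S}_2}(\pi(c))=\mathrm{occ}_{\mathcal{S}_1}(c)$ for every $c$. Hence the two set-strings match iff the multisets $\{\!\{\mathrm{occ}_{\mathcal{S}_1}(c)\}\!\}_c$ and $\{\!\{\mathrm{occ}_{\mathcal{S}_2}(c)\}\!\}_c$ coincide. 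For the forward direction, equal multisets let us pair characters with identical occurrence sets arbitrarily, which gives the bijection. For the converse, the bijection transports occurrence sets. Characters that never occur can be ignored, or paired arbitrarily if the alphabets must have equal size.

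A stronger, position-local form of this criterion is what the three-layer scheme will need later. For each $i$, consider the multiset $\{\!\{\mathrm{occ}(c) : c\in\mathcal{S}[i]\}\!\}$. Since the comparison is on equal lengths with fixed positions, absolute occurrence sets suffice; this is the multiset offset representation with offsets measured from position $1$. Equality of the global multisets is equivalent to equality of these per-position multisets for all $i$. The forward direction is clear. For the converse, every character occurring anywhere appears at some position $i$ with its occurrence set. If the per-position multisets agree, then for each set $X\subseteq[1..m]$ the number of $c$ with $\mathrm{occ}(c)=X$ can be read off at any $i\in X$ as the multiplicity of $X$ in the multiset at $i$. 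Therefore the global multisets agree.

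Algorithmically, fix a prime $p=\Theta(m^{3})$ or larger, and draw a random base $r$ and a random point $z$ in $\mathbb{F}_p$. In Layer 1, compute for each character $c$ the value $h(c)=\sum_{i\in\mathrm{occ}(c)} r^{i}\bmod p$, in one pass over each set-string, in total time $O(M_{\mathcal{S}_1}+M_{\mathcal{S}_2})$. The powers $r^i$ are precomputed, and characters are bucketed via the input or hashing, which keeps everything linear. In Layer 2, apply the mashing: at each position compute $g(i)=\prod_{c\in\mathcal{S}[i]}(z-h(c))\bmod p$. Then compare the global multiset fingerprint $\prod_c (z-h(c))$, or equivalently the sequence $g(1..m)$ together with the set sizes, for the two strings. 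Both computations are linear.

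The main obstacle is the error analysis. A false positive needs one of two failures. The first is a collision in Layer 1: two distinct occurrence sets receive the same $h$. Each pair gives a nonzero polynomial in $r$ of degree at most $m$, so a union bound over at most $(M_{\mathcal{S}_1}+M_{\mathcal{S}_2})^2$ pairs gives error at most $m(M_1+M_2)^2/p$. The second is a failure of the mashing: two distinct multisets of $h$-values yield equal products. This is a nonzero polynomial in $z$ of degree at most $|\Sigma|$, so the error is at most $\max(M_1,M_2)/p$. Choosing $p$ polynomially larger than $m(M_1+M_2)^2$ makes the total error at most $1/m$. Such a $p$ still fits in $O(1)$ words, so all arithmetic is constant time. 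There are no false negatives, since matching set-strings yield identical multisets and hence identical hashes.
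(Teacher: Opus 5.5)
Your proof is correct, and it takes a different route from the paper's in two places.

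\textbf{Canonical object.} The paper reduces to equality of the per-position offset representation $\widehat{\mathcal{S}}$. There offsets are measured from each character's first occurrence, so proving sufficiency (Lemma~\ref{lem:offset-correct}) needs an induction that recovers the counts $N^{\mathcal{S}}(s,O)$ of characters with a given start position and offset set. You keep absolute occurrence sets, so the characterization is immediate: $\pi$ must transport occurrence sets. You also note that one global multiset comparison suffices, so neither the induction nor a positionwise comparison is needed. The paper's relative offsets are built for the sliding window, where they stay fixed for every character not at the boundary; absolute positions relative to the window start would change for all characters at each shift.

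\textbf{Mashing.} The paper uses the exponent-style hash $\sum_{f} r_2^{f}$ over the universe $\mathbb{Z}_{p_1}$. This needs a second prime $p_2 \gg p_1$ (collision bound $p_1/p_2$), and each $r_2^{f}$ with $f$ up to $p_1$ is an exponentiation rather than a single operation. Your characteristic-polynomial fingerprint $\prod(z-h(c))$ stays in one field. Its collision probability is bounded by the multiset size over $p$, and it costs one multiplication per element, so you get cleaner error bounds and a genuinely linear running time. What it gives up is additivity over characters. The paper's Layer~3 rewrites the window hash as $\sum_{\sigma} r_2^{\psi^{(i)}[\sigma]} P_i(\sigma)$, and a product-form mash admits no such rewriting. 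So, despite your remark about the three-layer scheme, your Layer~2 would not plug directly into the matching algorithm.

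\textbf{Minor points.} Your opening choice $p=\Theta(m^3)$ is superseded by your later, correct requirement that $p$ be polynomially larger than $m(M_{\mathcal{S}_1}+M_{\mathcal{S}_2})^2$. The Layer-2 degree is bounded by the number of occurring characters, and that is what actually yields your $\max(M_{\mathcal{S}_1},M_{\mathcal{S}_2})/p$ bound.
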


\subsection{Organization}

The remainder of the paper is organized as follows. Section~\ref{sec:prelim} defines our model and the necessary notation.
Section~\ref{sec:comparison} serves as a warmup, proving Theorem~\ref{thm:comparison} for comparing two set-strings. Section~\ref{sec:algorithm} details our main pattern matching algorithm, proving Theorem~\ref{thm:main}.
Finally, we analyze the correctness and complexity in Section~\ref{sec:analysis}.

\section{Model and Definitions}\label{sec:prelim}

\subsection{Notation and Basic Definitions}

Throughout this paper, we denote by $\Sigma$ a finite alphabet and by $\Sigma^*$ the set of all finite strings over $\Sigma$.
We assume that $\Sigma =\{1,2,\cdots, |\Sigma|\}$. If not, we sort $\Sigma$ and rename the alphabet characters.
For a string $S \in \Sigma^*$, we write $|S|$ for its size (length) and $S[i]$ for its $i$-th character, where $i \in [1..|S|]$.
We use the notation $[n] := \{1, 2, \ldots, n\}$ for positive integers $n$.
For a prime $p$, we denote the field $\mathbb{Z}_p := \{0, 1, \ldots, p-1\}$, the integers modulo $p$.

\subsection{Classical Parameterized Matching}

We begin by recalling the classical parameterized matching framework of Baker~\cite{baker1993theory,baker1997parameterized}, which forms the foundation of our work.
\begin{definition}[Parameterized Match~\cite{baker1993theory}]\label{def:pmatch}
Let $P[1..m], T[1..m] \in \Sigma^*$ be two strings of equal length over alphabet $\Sigma$.
We say that $P$ and $T$ \emph{parameterized match}, denoted $P \sim_p T$, if there exists a bijection $\pi: \Sigma \to \Sigma$ such that $T[i] = \pi(P[i])$ for all $i \in [1..m]$.
\end{definition}

\begin{remark}
The bijection $\pi$ in Definition~\ref{def:pmatch} must be defined on the entire alphabet $\Sigma$, though it need only be consistent on the characters actually appearing in $P$ and $T$.
Some works distinguish between "constant" and "parameter" alphabets $\Sigma_C \cup \Sigma_P$ where $\pi$ acts as the identity on $\Sigma_C$;
for simplicity, we consider the case where all characters are parameters (equivalently, $\Sigma_C = \emptyset$).
\end{remark}

\begin{problem}[Parameterized Pattern Matching]\label{prob:ppm}{\ }\\
\textbf{Input:} A text $T[1..n]$ and a pattern $P[1..m]$ over alphabet $\Sigma$.\\
\textbf{Output:} All positions $i \in [1..n-m+1]$ such that $P \sim_p T[i..i+m-1]$.
\end{problem}

Baker's key insight was to transform the string into a structural representation invariant under permutation.
This is the \emph{prev} transformation (or \emph{predecessor encoding}):

\begin{definition}[Prev Transformation~\cite{baker1993theory}]\label{def:prev}
For a string $S[1..n]$ over alphabet $\Sigma$, the \emph{prev} transformation $\text{prev}(S)$ is the integer sequence $\text{prev}(S)[1..n]$ where
\[
\text{prev}(S)[i] =
\begin{cases}
0 & \text{if } S[i] \text{ first occurs at position } i\\
i - j & \text{otherwise, where } j = \max\{k < i : S[k] = S[i]\}
\end{cases}
\]
\end{definition}

\begin{lemma}[Correctness of Prev~\cite{baker1993theory}]\label{lem:prev-correct}
For strings $P[1..m], T[1..m]$ over $\Sigma$, we have $P \sim_p T$ if and only if $\text{prev}(P) = \text{prev}(T)$ (equality of integer sequences).
\end{lemma}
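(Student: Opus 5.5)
We prove the two directions of Lemma~\ref{lem:prev-correct} separately. Throughout we use one elementary fact: $\text{prev}(S)[i]$ is determined entirely by the pattern of equalities $S[k]=S[i]$ for $k<i$. It is $0$ exactly when no such $k$ exists, and otherwise it is $i$ minus the largest such $k$.

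\textbf{($\Rightarrow$)} Suppose $T[i]=\pi(P[i])$ for a bijection $\pi$. Since $\pi$ is injective, $P[k]=P[i]$ holds if and only if $\pi(P[k])=\pi(P[i])$, that is, if and only if $T[k]=T[i]$. So for each $i$ the sets $\{k<i : P[k]=P[i]\}$ and $\{k<i : T[k]=T[i]\}$ coincide. By the fact above, $\text{prev}(P)[i]=\text{prev}(T)[i]$ for every $i$.

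\textbf{($\Leftarrow$)} This direction needs more care. Assume $\text{prev}(P)=\text{prev}(T)$.
\begin{enumerate}
\item We show by induction on $i$ that, for all $k<i$, $P[k]=P[i]$ holds if and only if $T[k]=T[i]$.
\item Suppose $\text{prev}(P)[i]=d>0$. Then $P[i]=P[i-d]$ and $T[i]=T[i-d]$. For any $k<i$ we get $P[k]=P[i]$ iff $P[k]=P[i-d]$. By the induction hypothesis applied at position $i-d$ (or trivially if $k=i-d$, or by swapping the roles of $k$ and $i-d$ if $k>i-d$), this holds iff $T[k]=T[i-d]$, which is iff $T[k]=T[i]$.
\item Suppose instead $\text{prev}(P)[i]=0$. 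Then $P[i]$ differs from every earlier $P[k]$, and $T[i]$ differs from every earlier $T[k]$, so both sides of the equivalence are false for all $k<i$.
\end{enumerate}
A cleaner way to organize this is to prove the stronger invariant that the equivalence holds for all pairs $k,k'\le i$.

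Given the equivalence, the map $\pi_0 : P[i]\mapsto T[i]$ is well defined and injective on the characters occurring in $P$. Extend it to a bijection $\Sigma\to\Sigma$ by matching the remaining characters arbitrarily. This is possible because $\Sigma$ is finite and the image and domain of $\pi_0$ have equal size.

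The main obstacle is the induction in the $(\Leftarrow)$ direction. $\text{prev}$ only records the nearest previous occurrence, so we must show that these local links chain together into the full equality pattern. Phrasing the claim as ``same equivalence relation on positions $\{1,\dots,i\}$'' and noting that each new position either opens a new class or joins the class of position $i-d$ handles this cleanly.
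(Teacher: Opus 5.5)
Your proof is correct and complete. The paper gives no proof of this lemma to compare against: it cites Baker and states the lemma without proof.

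What you give is the standard argument. Equal prev sequences force $P$ and $T$ to induce the same equality relation on positions. The map $P[i]\mapsto T[i]$ is therefore well defined and injective on the characters of $P$, and it extends to a bijection of $\Sigma$ because $\Sigma$ is finite.

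Your treatment of the case $d>0$ is sound. You correctly rely on strong induction, since you invoke the hypothesis at both $i-d$ and $k$. The subcase $i-d<k<i$ can be handled more directly: by maximality of $i-d$ in the definition of prev, both $P[k]\neq P[i]$ and $T[k]\neq T[i]$, so no role-swapping is needed.

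Your proof has a similar shape to the paper's appendix proof of the set analogue, Lemma~\ref{lem:offset-correct}. That proof also runs a strong induction over positions, deciding whether each position starts a new character or continues an existing one, and then matches characters of identical type.
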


\subsection{Set-Strings and Set Parameterized Matching}

We now formalize the set parameterized matching model, which generalizes the concepts above to sequences of sets.
\begin{definition}[Set-String]
A \emph{set-string} $\mathcal{S}$ of length $n$ over alphabet $\Sigma$ is a sequence $\mathcal{S}[1..n]$ where each $\mathcal{S}[i] \subseteq \Sigma$ is a finite set of characters.
The \emph{length} of $\mathcal{S}$ is $n$ and its \emph{size} is $|\mathcal{S}| = \sum_{j=1}^{n} |\mathcal{S}[j]|$.
We write $\mathcal{S}[i..j]$ for the subsequence $\mathcal{S}[i], \mathcal{S}[i+1], \ldots, \mathcal{S}[j]$.
\end{definition}

\begin{definition}[Set Parameterized Match]\label{def:set-pmatch}
Let $\mathcal{S}_1[1..m]$ and $\mathcal{S}_2[1..m]$ be two set-strings of equal length over alphabet $\Sigma$.
We say that $\mathcal{S}_1$ and $\mathcal{S}_2$ \emph{set-parameterized match}, denoted $\mathcal{S}_1 \approx_p \mathcal{S}_2$, if there exists a bijection $\pi: \Sigma \to \Sigma$ such that for all $i \in [1..m]$: $\pi(\mathcal{S}_1[i]) = \mathcal{S}_2[i]$

where $\pi(A) := \{\pi(a) : a \in A\}$ for any set $A \subseteq \Sigma$.
\end{definition}

\begin{remark}
Definition~\ref{def:set-pmatch} requires that the bijection $\pi$ simultaneously rename all characters in all sets consistently.
This generalizes Definition~\ref{def:pmatch}: if each set $\mathcal{P}[i]$ and $\mathcal{T}[i]$ is a singleton, we recover classical parameterized matching.
\end{remark}

\begin{problem}[Reformulation of Set Parameterized Matching]\label{prob:set-pm} {\ }\\
\textbf{Input:} A text $\mathcal{T}[1..n]$ and a pattern $\mathcal{P}[1..m]$, both set-strings over alphabet $\Sigma$.\\
\textbf{Output:} All positions $i \in [1..n-m+1]$ such that $\mathcal{P} \approx_p \mathcal{T}[i..i+m-1]$.
\end{problem}

For the special case where $n=m$ we call the problem the {\em Set Parameterized Comparison} as all that is required is to verify if the strings parameterize match each other.

\subsection{Offset Representation}

A naive extension of the prev transformation to set-strings—where one replaces every character in every set with its predecessor offset—fails because it destroys the continuity of character chains.
Consider the following counter-example:
Let set-string $\mathcal{S}_1 =\{\emptyset, \{b\}, \emptyset, \emptyset, \{a\},\emptyset, \emptyset, \{a,b\}, \emptyset,\emptyset,\emptyset, \emptyset, \{a\}\} $ and 
$\mathcal{S}_2 = \{\emptyset, \{c\}, \emptyset, \emptyset, \{d\},\emptyset, \emptyset, \{c,d\}, \emptyset,\emptyset,\emptyset, \emptyset, \{c\}\} $ 
If we simply replace each character with its prev value, the resulting sets of prevs match perfectly at every position:
$prev(\mathcal{S}_1) =prev(\mathcal{S}_2) =$ 
$$ \{\emptyset, \{0\}, \emptyset, \emptyset, \{0\},\emptyset, \emptyset, \{3,6\}, \emptyset,\emptyset,\emptyset, \emptyset, \{5\}\} $$

The sets of prevs are identical, yet there is no valid parameterization.
In $\mathcal{S}_1$, the character appearing at 13 ('a') is the one that started at 5. In $\mathcal{S}_2$, the character at 13 ('c') is the one that started at 2. The simple set of prevs $\{3, 6\}$ at position 8 obscures which prev belongs to which character chain.
To overcome this, we represent all appearances of a specific character ``together'' as a structural unit.
While a ``prev set'' (listing all predecessor offsets for a character) might capture this structure, it is not convenient for the rolling hash updates required by our algorithm.
Instead, we chose to represent this ``togetherness'' using \emph{offset sets}.
Since a character may appear in multiple sets, and a set contains multiple characters, a single "previous occurrence" is ill-defined.
We must instead track the \emph{set} of previous offsets. This motivates the following definition.
\begin{definition}[Offset Set]\label{def:offset-set}
Let $\mathcal{S}[1..n]$ be a set-string over $\Sigma$, and let $\sigma \in \Sigma$.
Suppose $\sigma$ appears in $\mathcal{S}$ at positions $i_1 < i_2 < \cdots < i_k$ (i.e., $\sigma \in \mathcal{S}[i_j]$ for all $j \in [1..k]$).
The \emph{offset set} of $\sigma$ in $\mathcal{S}$ is
\[
\text{OffSet}_{\mathcal{S}}(\sigma) := \{i_j - i_1 : j \in [1..k]\} \subseteq \{0, 1, 2, \ldots, n-1\}.
\]
If $\sigma$ does not appear in $\mathcal{S}$, we define $\text{OffSet}_{\mathcal{S}}(\sigma) := \emptyset$.
\end{definition}

\noindent Note that multisets are necessary as we may have two characters in one set with the same offset set.
We need to keep both of them as the multiplicity of characters in the compared set-strings must be equal.
With the offset set defined for each individual character, we now aggregate this information to represent the structure of the entire set-string.
At any given position $i$ in the string, multiple characters may be present.
To fully capture the parameterization constraints at this position, we must collect the offset sets of \emph{all} characters appearing there, forming the signature of that position, allowing us to compare the structural roles of characters across two different strings.
\begin{definition}[Multiset Offset Representation]
For a set-string $\mathcal{S}$ of length $m$, the offset representation at position $i$, denoted $\widehat{\mathcal{S}}[i]$, is the \emph{multiset} of offset sets for all characters appearing at $\mathcal{S}[i]$.
\[
\widehat{\mathcal{S}}[i] = \!\{\text{OffSet}_{\mathcal{S}}(\sigma) \mid \sigma \in \mathcal{S}[i]\!\}
\]

and 
\[
\widehat{\mathcal{S}} = \!\widehat{\mathcal{S}}[1]\widehat{\mathcal{S}}[2]\cdots \widehat{\mathcal{S}}[n]\!
\]

\end{definition}

In continuation with the former example: 

$\widehat{\mathcal{S}_1} =\{\emptyset, \{\{0,6\}\}, \emptyset, \emptyset, \{\{0,3,8\}\},\emptyset, \emptyset, \{\{0,6\},\{0,3,8\}\}, \emptyset,\emptyset,\emptyset, \emptyset, \{\{0,3,8\}\}\ \} $ and 

$\widehat{\mathcal{S}_2} = \{\emptyset, \{\{0,6,11\}\}, \emptyset, \emptyset, \{\{0,3\}\},\emptyset, \emptyset, \{\{0,6,11\},\{0,3\}\}, \emptyset,\emptyset,\emptyset, \emptyset, \{\{0,6,11\}\}\ \} $ 

\begin{observation}\label{obs:offset-size}
For a set-string $\mathcal{S}[1..n]$ over alphabet $\Sigma$, each offset set $\text{OffSet}_{\mathcal{S}}(\sigma)$ has size at most $n$.
Therefore, $\widehat{\mathcal{S}}[i]$ contains exactly $|\mathcal{S}[i]|$ offset sets, each of size at most $n$.
Letting $N = \sum_{i=1}^{n} |\mathcal{S}[i]|$ denote the total number of character occurrences, in the worst case, $|\widehat{\mathcal{S}}|$ (the total number of integers in the offset representation) is $O(nN)$.
\end{observation}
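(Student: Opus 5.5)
The statement just given is Observation~\ref{obs:offset-size}, a counting statement, so the plan is a direct double count straight from Definitions~\ref{def:offset-set} and the multiset offset representation.

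\textbf{Step 1: bounding a single offset set.} Fix $\sigma$ with occurrences at positions $i_1<\cdots<i_k$. These positions are distinct elements of $[1..n]$, so $k\le n$. The map $j\mapsto i_j-i_1$ is injective, because the $i_j$ are distinct. Hence $|\text{OffSet}_{\mathcal{S}}(\sigma)|=k\le n$. If $\sigma$ does not occur, the offset set is $\emptyset$ and the bound is trivial. Note also that every offset lies in $\{0,\dots,n-1\}$, which is consistent with the definition.

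\textbf{Step 2: counting the members of $\widehat{\mathcal{S}}[i]$.} By definition, $\widehat{\mathcal{S}}[i]$ is a \emph{multiset} indexed by $\sigma\in\mathcal{S}[i]$: each character contributes one element, and coinciding offset sets are kept with multiplicity rather than merged. It therefore contains exactly $|\mathcal{S}[i]|$ offset sets, and by Step~1 each has size at most $n$. The only care needed in the whole argument is here, to stress that multiplicities are counted, since otherwise ``exactly'' would fail.

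\textbf{Step 3: total size.} Sum over positions: $|\widehat{\mathcal{S}}|=\sum_{i=1}^n\sum_{\sigma\in\mathcal{S}[i]}|\text{OffSet}_{\mathcal{S}}(\sigma)|\le\sum_i |\mathcal{S}[i]|\cdot n = nN$, so $|\widehat{\mathcal{S}}|=O(nN)$.

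Regrouping the sum by character gives the sharper form $\sum_\sigma k_\sigma^2$, where $k_\sigma$ is the number of occurrences of $\sigma$. This is also at most $nN$, since $k_\sigma\le n$ and $\sum_\sigma k_\sigma=N$. It shows the bound is tight up to constants, for example when a single character occurs at every position. No step is a genuine obstacle.
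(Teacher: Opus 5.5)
Your proof is correct and is essentially the paper's own reasoning, which the paper gives inline in the observation rather than as a separate proof: each of the $|\mathcal{S}[i]|$ offset sets at position $i$ (counted with multiplicity) has at most $n$ elements, and summing over positions gives at most $nN$. Your regrouping as $\sum_\sigma k_\sigma^2$ is a small extra the paper does not state, and it shows the worst case is $\Theta(N\min(n,N))$, so $nN$ is tight up to constants precisely when $N=\Omega(n)$, which covers your single-character example.
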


The following lemma, proof in appendix, shows that the Multiset Offset Representation provides a necessary and sufficient condition for set parameterized matching.
\begin{lemma}\label{lem:offset-correct}
Let $\mathcal{S}_1$ and $\mathcal{S}_2$ be set-strings of length $m$. Then $\mathcal{S}_1 \approx_p \mathcal{S}_2$ if and only if $\widehat{\mathcal{S}_1} = \widehat{\mathcal{S}_2}$ (where equality implies equality of the sequence of multisets).
\end{lemma}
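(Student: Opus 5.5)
The plan is to prove the two directions separately. The forward direction is a direct transport of structure. The reverse direction reconstructs, from $\widehat{\mathcal{S}}$ alone, how many characters have each possible set of occurrence positions.

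\emph{($\Rightarrow$).} Suppose $\pi$ witnesses $\mathcal{S}_1 \approx_p \mathcal{S}_2$. Since $\pi$ is a bijection and $\pi(\mathcal{S}_1[i])=\mathcal{S}_2[i]$ for every $i$, we have $\sigma\in\mathcal{S}_1[i]$ iff $\pi(\sigma)\in\mathcal{S}_2[i]$. Hence $\sigma$ occurs in $\mathcal{S}_1$ at exactly the positions where $\pi(\sigma)$ occurs in $\mathcal{S}_2$, so $\text{OffSet}_{\mathcal{S}_1}(\sigma)=\text{OffSet}_{\mathcal{S}_2}(\pi(\sigma))$. Because $\pi$ restricted to $\mathcal{S}_1[i]$ is a bijection onto $\mathcal{S}_2[i]$ that preserves offset sets, the multisets $\widehat{\mathcal{S}_1}[i]$ and $\widehat{\mathcal{S}_2}[i]$ coincide.

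\emph{($\Leftarrow$).} For a character $\sigma$ appearing in $\mathcal{S}$, let its \emph{occurrence set} be $\text{Occ}_{\mathcal{S}}(\sigma)=f+\text{OffSet}_{\mathcal{S}}(\sigma)$, where $f$ is its first occurrence. For a position $f$ and an offset set $O\ni 0$, let $c_{\mathcal{S}}(f,O)$ be the number of characters whose first occurrence is $f$ and whose offset set is $O$; equivalently, whose occurrence set is $f+O$.

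The key identity is that the multiplicity of $O$ in $\widehat{\mathcal{S}}[i]$ equals
\[
\sum_{f\le i,\; i-f\in O} c_{\mathcal{S}}(f,O),
\]
since a character with offset set $O$ lies in $\mathcal{S}[i]$ iff its first occurrence $f$ satisfies $i-f\in O$. The term $f=i$ appears with coefficient one because $0\in O$. We therefore get the recursion
\[
c_{\mathcal{S}}(i,O)=\text{mult}_{\widehat{\mathcal{S}}[i]}(O)-\sum_{f<i,\; i-f\in O} c_{\mathcal{S}}(f,O).
\]
By induction on $i$, this gives $c_{\mathcal{S}_1}(i,O)=c_{\mathcal{S}_2}(i,O)$ for all $i$ and $O$ whenever $\widehat{\mathcal{S}_1}=\widehat{\mathcal{S}_2}$. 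So for every nonempty $X\subseteq[m]$, the number of characters with occurrence set $X$ is the same in both set-strings. I expect this inductive recovery of the counts to be the main obstacle. It is exactly the point where the naive prev encoding fails, and care is needed to check that characters sharing an offset set but having different first occurrences are separated correctly.

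With the counts in hand, we define $\pi$ on each class $\{\sigma:\text{Occ}_{\mathcal{S}_1}(\sigma)=X\}$ as an arbitrary bijection onto $\{\tau:\text{Occ}_{\mathcal{S}_2}(\tau)=X\}$. Summing over $X$, the same number of characters appear in $\mathcal{S}_1$ and $\mathcal{S}_2$, so the characters that do not appear in either string form sets of equal size, and $\pi$ extends to a bijection of $\Sigma$. Finally, membership of a character in position $i$ depends only on its occurrence set, and $\pi$ preserves occurrence sets. Hence $\pi(\mathcal{S}_1[i])=\mathcal{S}_2[i]$ for every $i$, which gives $\mathcal{S}_1 \approx_p \mathcal{S}_2$.
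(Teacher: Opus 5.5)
Your proof is correct and follows the paper's argument essentially step for step. Your $c_{\mathcal{S}}(f,O)$ and multiplicity recursion are the paper's $N^{\mathcal{S}}(s,O)$ and $Count^{\mathcal{S}}(i,O)$, proved by the same strong induction on $i$ and followed by an arbitrary matching within each (start position, offset set) class. You are in fact slightly more careful than the paper: you explicitly extend $\pi$ to the characters absent from both strings, and at the end you verify that $\pi(\mathcal{S}_1[i])=\mathcal{S}_2[i]$.
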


\subsection{Challenges for Set Parameterized Comparison and Matching}

While Lemma~\ref{lem:offset-correct} successfully reduces set-parameterized comparison to checking equality of offset representations, it does not immediately yield an efficient algorithm for set parameterized comparison and, definitely, not for set parameterized matching,
due to two distinct hurdles:

\begin{enumerate}
\item \textbf{Size blowup:} By Observation~\ref{obs:offset-size}, the offset representation can be quadratically larger than the input.
Comparing these representations directly is prohibitively expensive.
\item \textbf{Dynamic offset sets:} For set parameterized \emph{matching}, we compare $\mathcal{P}$ against sliding windows $\mathcal{T}[i..i+m-1]$.
The offset sets $\text{OffSet}_{\mathcal{T}[i..i+m-1]}(\sigma)$ depend on the start index $i$ and must be recomputed for each position.
This contrasts with classical pattern matching, where the text "characters" remain static throughout the algorithm.
\end{enumerate}

Our algorithms addresses these challenges via a multi-layer \emph{hashing} scheme, detailed in Sections~\ref{sec:comparison} and~\ref{sec:algorithm}.
\subsection{Karp-Rabin Fingerprinting}

Our algorithms employ the Karp-Rabin fingerprinting technique~\cite{KarpR87}. We distinguish between two variants of the hash function, depending on whether the input is treated as a sequence (order matters) or a multiset (order irrelevant).
\begin{definition}[Sequence Hash]\label{def:kr-seq}
Let $p$ be a prime and $r \in \mathbb{Z}_p$.
For a sequence of integers $A = (a_0, a_1, \ldots, a_{k-1})$, the sequence hash treats the elements as coefficients of a polynomial:
\[
h_{p,r}^{seq}(A) := \sum_{i=0}^{k-1} a_i \cdot r^i \bmod p.
\]
\end{definition}

\begin{definition}[Set/Multiset Hash]\label{def:kr-set}
Let $S$ be a multiset of integers drawn from a universe $\{0, \dots, U-1\}$.
Let $c_v$ denote the multiplicity of value $v$ in $S$.
The set/multiset hash treats the values as exponents:
\[
h_{p,r}^{set}(S) := \sum_{v=0}^{U-1} c_v \cdot r^{v} \bmod p.
\]
\end{definition}

\begin{lemma}[Collision Probability~\cite{KarpR87}]\label{lem:kr-collision}
Let $p$ be a prime and $r \in \mathbb{Z}_p$ be chosen uniformly at random.
\begin{enumerate}
    \item \textbf{Sequence Collision:} Let $A$ and $B$ be two distinct sequences of length at most $\ell$.
    Then:
    \[ \Pr[h_{p,r}^{seq}(A) = h_{p,r}^{seq}(B)] \leq \frac{\ell-1}{p} \]
    \item \textbf{Multiset Collision:} Let $S_1$ and $S_2$ be two distinct multisets containing values from the universe $\{0, \dots, U-1\}$.
    Then:
    \[ \Pr[h_{p,r}^{set}(S_1) = h_{p,r}^{set}(S_2)] \leq \frac{U}{p} \]
\end{enumerate}
\end{lemma}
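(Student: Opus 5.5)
The statement is the Karp--Rabin collision bound, Lemma~\ref{lem:kr-collision}. The plan is the standard polynomial-identity argument: a collision means a fixed nonzero polynomial over $\mathbb{Z}_p$ vanishes at the random point $r$, and a nonzero polynomial of degree $d$ over a field has at most $d$ roots.

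\textbf{Sequence part.} Let $A=(a_0,\dots,a_{k-1})$ and $B=(b_0,\dots,b_{k'-1})$ with $k,k'\le \ell$. Pad the shorter sequence with zeros to a common length $\ell$, and define
\[
Q(x)=\sum_{i=0}^{\ell-1}(a_i-b_i)\,x^i \in \mathbb{Z}_p[x].
\]
Then $h^{seq}_{p,r}(A)=h^{seq}_{p,r}(B)$ exactly when $Q(r)\equiv 0 \pmod p$, and $Q$ has degree at most $\ell-1$.

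The main obstacle is showing that $Q$ is nonzero modulo $p$, and the statement hides two conditions that this needs.
\begin{itemize}
\item The entries must lie in $\{0,\dots,p-1\}$, so that distinct entries give coefficients $a_i-b_i\not\equiv 0$. This is implicit in the paper's use: the values hashed are residues or integers smaller than $p$.
\item Zero-padding is not injective: $(1)$ and $(1,0)$ hash identically. So either the sequences compared have equal length, or the lengths are distinguished separately. In the algorithm the windows all have equal length $m$, and I will state this convention explicitly.
\end{itemize}
Under these assumptions $Q\not\equiv 0$, so $Q$ has at most $\ell-1$ roots in $\mathbb{Z}_p$. Since $r$ is uniform on $p$ values, the collision probability is at most $(\ell-1)/p$.

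\textbf{Multiset part.} Let $c_v$ and $d_v$ be the multiplicities of $v$ in $S_1$ and $S_2$, and define
\[
R(x)=\sum_{v=0}^{U-1}(c_v-d_v)\,x^v,
\]
a polynomial of degree at most $U-1$. Since $S_1\neq S_2$, some $c_v\neq d_v$. For $R$ to be nonzero mod $p$ we need $|c_v-d_v|<p$, which holds whenever all multiplicities are below $p$; I will assume this, since $p$ is chosen polynomially larger than the input size. Then $R$ has at most $U-1\le U$ roots, so the collision probability is at most $U/p$.

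Finally, I will flag that both bounds are applied in the paper with $p$ much larger than the universe and multiplicity bounds, which is exactly what legitimises the two "nonzero mod $p$" steps.
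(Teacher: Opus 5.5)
Your proof is correct and is essentially the paper's argument: a collision means $r$ is a root of the difference polynomial, which has degree at most $\ell-1$ (resp.\ at most $U-1\le U$). The paper simply asserts that this polynomial is nonzero, whereas you make explicit the hypotheses that assertion needs: entries reduced mod $p$, equal lengths, and multiplicities below $p$. Without them the statement fails; for example, $(1)$ versus $(1,0)$ collides with probability $1$.

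One correction to your closing remark: in the paper's Layer~3 the sequence entries are Layer~2 values from $\mathbb{Z}_{p_2}$ with $p_2 \gg p_3$, so the ``entries below $p$'' condition is not actually satisfied in that application.
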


\begin{proof}
In both cases, the collision condition $h(X) = h(Y)$ is equivalent to $r$ being a root of the non-zero difference polynomial $Q(x) = h(X) - h(Y)$ over the field $\mathbb{Z}_p$.
By the fundamental theorem of algebra, the number of roots is bounded by the degree of $Q(x)$.
\begin{enumerate}
    \item For sequences, $Q(r) = \sum (a_i - b_i)r^i$.
    The degree is bounded by the maximum index, which is $\ell - 1$.
    \item For multisets, $Q(r) = \sum (c_{1,v} - c_{2,v})r^v$. The degree is bounded by the maximum value $v$ such that counts differ, which is at most $U$.
\end{enumerate}
Dividing the degree by the field size $p$ yields the probability bound.
\end{proof}

\subsection{Rolling Hashes}

In the classical Rabin-Karp pattern matching algorithm~\cite{KarpR87}, the rolling hash essentially computes a fingerprint for every window of length $m$ in the text, and compares it to the fingerprint of the pattern.
\begin{definition}[Rolling Hash]\label{def:rolling-hash}
For a sequence $S[1..n]$ where each $S[i] \in \mathbb{Z}_p$, and a window size $m \leq n$, define
\[
H_{p,r}^{(m)}(S, i) := \sum_{j=0}^{m-1} S[i+j] \cdot r^j \bmod p
\]
for $i \in [1..n-m+1]$.
This is the \emph{rolling hash} of the window $S[i..i+m-1]$.
\end{definition}

A key property of this scheme is the ability to efficiently update the hash value as the window slides from one position to the next.
\begin{observation}[Efficient Rolling Hash Update]\label{obs:rolling}
Given $H_{p,r}^{(m)}(S, i)$, we can compute $H_{p,r}^{(m)}(S, i+1)$ in $O(1)$ time via
\[
H_{p,r}^{(m)}(S, i+1) = \frac{H_{p,r}^{(m)}(S, i) - S[i]}{r} + S[i+m] \cdot r^{m-1} \bmod p.
\]
\end{observation}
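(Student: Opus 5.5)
The plan is to verify the update formula directly from Definition~\ref{def:rolling-hash} by algebra in the field $\mathbb{Z}_p$, and then to account separately for the $O(1)$ running time. I will first make explicit an assumption that the formula needs: $r \neq 0$, so that division by $r$ means multiplication by the inverse $r^{-1} \in \mathbb{Z}_p$. This is harmless. One can draw $r$ uniformly from $\mathbb{Z}_p \setminus \{0\}$, or note that $r = 0$ occurs with probability $1/p$ and can be absorbed into the error bound of Lemma~\ref{lem:kr-collision}.

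For the algebra, I start from $H_{p,r}^{(m)}(S,i) = \sum_{j=0}^{m-1} S[i+j] r^j$. The $j=0$ term is exactly $S[i]$, so subtracting it gives
\[
H_{p,r}^{(m)}(S,i) - S[i] \equiv \sum_{j=1}^{m-1} S[i+j] r^j \pmod p .
\]
Every remaining term carries at least one factor of $r$. Multiplying by $r^{-1}$ and reindexing with $j' = j-1$ yields $\sum_{j'=0}^{m-2} S[(i+1)+j'] r^{j'}$, which is the window starting at $i+1$ with its last term missing. Adding $S[i+m] \cdot r^{m-1}$ supplies that term, since $(i+1)+(m-1) = i+m$. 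The result is precisely $H_{p,r}^{(m)}(S,i+1)$. All manipulations are valid because $\mathbb{Z}_p$ is a field and $r^{-1}$ exists.

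For the time bound, I will precompute $r^{-1}$ once, using the extended Euclidean algorithm or Fermat's little theorem. I will also precompute $r^{m-1} \bmod p$, either by repeated squaring or by $m-1$ multiplications in a preprocessing pass that is subsumed by the linear total cost. After that, each update consists of one subtraction, one multiplication by $r^{-1}$, one multiplication by $r^{m-1}$, and one addition, all modulo $p$. Under the standard word-RAM assumption that $p$ fits in a machine word, this is $O(1)$ arithmetic operations.

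There is no real obstacle here. The only points needing care are stating $r \neq 0$ explicitly and making clear that the $O(1)$ claim is per update, after a one-time precomputation of $r^{-1}$ and $r^{m-1}$.
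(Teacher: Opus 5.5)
Your verification is correct and is the direct check from Definition~\ref{def:rolling-hash} that the paper has in mind; the paper gives no proof, treating the observation as immediate. Your two caveats are sensible additions rather than a different route: that $r \neq 0$, which the paper's choice $r \in \mathbb{Z}_p$ does not literally guarantee, and that $r^{-1}$ and $r^{m-1}$ are precomputed once before the per-step $O(1)$ updates.
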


Our algorithm, to appear later, builds upon this rolling concept.
However, adapting it to parameterized matching, and specifically set parameterized matching, introduces significant complexity: the effective value of a ``character'' (its offset) is relative to the window's start position and thus changes dynamically as the window rolls.
\section{Set Parameterized Comparison}\label{sec:comparison}

In this section, we prove Theorem~\ref{thm:comparison} by presenting an efficient algorithm for set parameterized comparison, i.e., determining whether two set-strings of equal length set-parameterized match.
This algorithm serves as a conceptual warmup for our main pattern matching algorithm and introduces the first two layers of our hashing scheme.
\subsection{Algorithm Overview}

The algorithm proceeds in three logical steps to progressively compress the set-string structure:
\begin{enumerate}
\item First, we compute the explicit offset sets for all characters in both set-strings.
\item Next, we hash each offset set to a single integer fingerprint using Karp-Rabin hashing (\textbf{Layer 1}).
\item Finally, we hash the set of fingerprints at each position to obtain a single value, allowing for linear comparisons (\textbf{Layer 2}).
\end{enumerate}

\subsection{Layer 1: Hashing Offset Sets}

We first address the computational cost of generating the offset sets. Recall that $\Sigma =\{1,2,\cdots, |\Sigma|\}$.
\begin{lemma}[Offset Set Computation]\label{lem:offset-computation}
Given a set-string $\mathcal{S}[1..m]$ over alphabet $\Sigma$ of size $M = \sum_{i=1}^{m} |\mathcal{S}[i]|$, we can compute $\text{OffSet}_{\mathcal{S}}(\sigma)$ for all $\sigma \in \Sigma$ in time $O(M)$ and space $O(M)$.
\end{lemma}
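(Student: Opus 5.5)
We prove Lemma~\ref{lem:offset-computation} with a single left-to-right scan of $\mathcal{S}$ that appends each occurrence of a character to that character's list. The key point is that $\sum_{\sigma}|\text{OffSet}_{\mathcal{S}}(\sigma)|$ equals the total number of occurrences, which is exactly $M$. This is because each pair $(i,\sigma)$ with $\sigma\in\mathcal{S}[i]$ contributes exactly one offset. So, unlike the representation $\widehat{\mathcal{S}}$ of Observation~\ref{obs:offset-size}, the offset sets themselves have only linear total size, and writing them all out explicitly is affordable.

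Concretely, we maintain two arrays indexed by $\Sigma$: $\text{first}[\sigma]$, initially undefined, holding the first position $i_1$ at which $\sigma$ appears, and $L[\sigma]$, an initially empty list. We scan $i=1,\dots,m$ and, for each $\sigma\in\mathcal{S}[i]$, proceed as follows. If $\text{first}[\sigma]$ is undefined, we set it to $i$. In either case we then append $i-\text{first}[\sigma]$ to $L[\sigma]$. Because positions are visited in increasing order, each $L[\sigma]$ is produced already sorted, namely $0=i_1-i_1<i_2-i_1<\cdots$, and it equals $\text{OffSet}_{\mathcal{S}}(\sigma)$ by Definition~\ref{def:offset-set}. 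Since a set contains no duplicates, no offset is appended twice. Each occurrence costs $O(1)$, so the scan takes $O(m+M)$ time. We may assume empty positions are either skipped or counted in the input size, so this is $O(M)$, or alternatively treat $m\le M$ as part of the input encoding.

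The only delicate step is the initialization of the arrays indexed by $\Sigma$, which could cost $\Theta(|\Sigma|)$ and exceed $O(M)$. I would handle this in one of two ways. The first option is to observe that only characters actually occurring matter, of which there are at most $M$, and to rename them to $\{1,\dots,M\}$. The paper's convention that $\Sigma=\{1,\dots,|\Sigma|\}$ after renaming the alphabet suggests we may take $|\Sigma|\le M$. The second option avoids any renaming assumption by using the standard constant-time array-initialization trick, or a list recording which $\sigma$ have been touched, so that only touched entries are ever read, and we can output the lists of exactly those characters. Characters with no occurrence have $\text{OffSet}=\emptyset$ implicitly. Total space is $O(M)$ for the lists plus the touched entries.
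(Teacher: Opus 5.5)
Your proof is correct and follows the paper's own argument: one left-to-right pass that records each character's first occurrence and appends $i-\text{first}[\sigma]$ at every occurrence, so the total work is proportional to the $M$ character--position pairs. You treat the $O(|\Sigma|)$ initialization and the $O(m)$ outer-loop cost more carefully than the paper, which simply asserts $|\Sigma|=O(M)$ and uses a hash table with expected $O(1)$ operations; one caveat is that the uninitialized-array trick alone still allocates $\Theta(|\Sigma|)$ space, so your $O(M)$ space bound actually depends on the renaming or on hashing.
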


\begin{proof}
We perform a single pass over the set-string. We maintain a hash table $\textsc{FirstOcc}: \Sigma \to [1..m] \cup \{\bot\}$ to track the first occurrence of each character, and for each $\sigma \in \Sigma$, a list $\textsc{Offsets}[\sigma]$ initially empty.

\begin{algorithmic}[1]
\FOR{$\sigma \in \Sigma$}
    \STATE $\textsc{FirstOcc}[\sigma] \gets \bot$
    \STATE $\textsc{Offsets}[\sigma] \gets \emptyset$
\ENDFOR
\FOR{$i = 1$ to $m$}
    \FOR{$\sigma \in \mathcal{S}[i]$}
       \IF{$\textsc{FirstOcc}[\sigma] = \bot$}
           \STATE $\textsc{FirstOcc}[\sigma] \gets i$
           \STATE $\textsc{Offsets}[\sigma] \gets \textsc{Offsets}[\sigma] \cup \{0\}$
       \ELSE
           \STATE $\textsc{Offsets}[\sigma] \gets \textsc{Offsets}[\sigma] \cup \{i - \textsc{FirstOcc}[\sigma]\}$
       \ENDIF
   
 \ENDFOR
\ENDFOR
\end{algorithmic}

The initialization loop takes $O(|\Sigma|)$ time, which is $O(M)$.
The outer loop runs $m$ times, and the inner loop processes each character in each set.
In total, we process exactly $M = \sum_{i=1}^{m} |\mathcal{S}[i]|$ character-position pairs.
Each operation takes $O(1)$ expected time with hashing, giving a total time complexity of $O(M)$.

\end{proof}

Once the offset sets are computed, we compress them using the set hash function from Def.~\ref{def:kr-set}.
\begin{definition}[Layer 1 Hash]\label{def:layer1-hash}
Let $p_1$ be a prime and let $r_1 \in \mathbb{Z}_{p_1}$ be chosen uniformly at random.
For an offset set $O \subseteq \{0, 1, \ldots, m-1\}$, the Layer 1 hash function is:
\[
\phi_1(O) := h_{p_1, r_1}^{set}(O) = \sum_{x \in O} r_1^x \bmod p_1.
\]
\end{definition}

\emph{Collision Analysis:} By the definition, the Layer 1 hash is exactly the multiset hash $\phi_1(O) = h_{p_1, r_1}^{set}(O)$.
The elements of the offset sets are integers drawn from the universe $\{0, \dots, m-1\}$.
We apply Lemma~\ref{lem:kr-collision} (Multiset Collision case) with universe bound $U = m$, yielding for $O_1 \neq O_2$:
\[
\Pr[\phi_1(O_1) = \phi_1(O_2)] \leq \frac{m}{p_1}.
\]

\subsection{Layer 2: Hashing Sets of Fingerprints}

After applying Layer 1, each position $i$ in set-string $\mathcal{S}$ is represented by a multiset of fingerprints:
\[
\Phi_1(\mathcal{S}[i]) := \{\phi_1(\text{OffSet}_{\mathcal{S}}(\sigma)) : \sigma \in \mathcal{S}[i]\}.
\]

To compare positions efficiently, we must compare these multisets. We therefore apply a second layer of hashing to "mash" these multisets into single values.
Specifically, we use the multiset hash function from Definition~\ref{def:kr-set}.

\begin{definition}[Layer 2 Hash]\label{def:layer2-hash}
Let $p_2$ be a prime and let $r_2 \in \mathbb{Z}_{p_2}$ be chosen uniformly at random.
For a multiset $F$ of fingerprints from $\mathbb{Z}_{p_1}$, the Layer 2 hash function is:
\[
\phi_2(F) := h_{p_2, r_2}^{set}(F) = \sum_{f \in F} r_2^f \bmod p_2.
\]
\end{definition}

We call the resulting hash value the \emph{mashed} value of the position and the \emph{mashed representation} of a set-string $\mathcal{S}[1..m]$ is the sequence $\mathcal{M}(\mathcal{S}) = (M_1, M_2, \ldots, M_m)$ where $M_i := \phi_2(\Phi_1(\mathcal{S}[i])).$

\emph{Collision Analysis:} By the definition, Layer 2 is also the multiset hash $\phi_2(F) = h_{p_2, r_2}^{set}(F)$.
The elements of the multisets are Layer 1 fingerprints, which are integers drawn from the universe $\{0, \dots, p_1-1\}$.
Therefore, we can apply Lemma~\ref{lem:kr-collision} (Multiset Collision case) with universe bound $U = p_1$, yielding:

\[
\Pr_{r_2}[\phi_2(F_1) = \phi_2(F_2)] \leq \frac{p_1}{p_2}
\]

\subsection{Set Parameterized Comparison Algorithm}

The full procedure for set parameterized comparison, comparing two set-strings, consisting of applying the 2-level hash to each, is formalized in Algorithm~\ref{alg:compare} and appears in the appendix.

\begin{theorem}[Correctness of Comparison Algorithm]\label{thm:compare-correct}
Algorithm~\ref{alg:compare} correctly determines whether $\mathcal{S}_1 \approx_p \mathcal{S}_2$ with probability at least $1 - m^{-1}$.
\end{theorem}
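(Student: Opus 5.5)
The algorithm's output should agree with the predicate $\widehat{\mathcal{S}_1} = \widehat{\mathcal{S}_2}$, which by Lemma~\ref{lem:offset-correct} is equivalent to $\mathcal{S}_1 \approx_p \mathcal{S}_2$. We assume Algorithm~\ref{alg:compare} computes the mashed representations $\mathcal{M}(\mathcal{S}_1)$ and $\mathcal{M}(\mathcal{S}_2)$ and answers ``match'' iff they agree at every position. Errors are then one-sided. If $\widehat{\mathcal{S}_1} = \widehat{\mathcal{S}_2}$, then at every position $i$ the multisets of offset sets coincide. Applying the deterministic function $\phi_1$ elementwise gives equal multisets $\Phi_1(\mathcal{S}_1[i]) = \Phi_1(\mathcal{S}_2[i])$, and applying $\phi_2$ gives equal mashed values. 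So the algorithm always answers ``match'' on true matches, and it remains to bound the false positive probability when $\widehat{\mathcal{S}_1} \neq \widehat{\mathcal{S}_2}$.

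For the non-matching case, fix a position $i$ where the multisets of offset sets differ, and consider two ways the algorithm could still report equality at $i$.

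\textbf{(a) Layer 1 collision.} The Layer 1 fingerprints of two distinct offset sets collide. The offset sets occurring in either string number at most $2|\Sigma|$ relevant ones, and more simply at most $M_{\mathcal{S}_1}+M_{\mathcal{S}_2} \le 2m|\Sigma|$; the precise count only affects the choice of $p_1$. A union bound over all pairs of distinct offset sets, each colliding with probability at most $m/p_1$, shows that with probability at least $1 - K^2 m/p_1$ (where $K$ is the number of distinct offset sets) $\phi_1$ is injective on all offset sets appearing anywhere. On that event, distinct multisets of offset sets map to distinct multisets of fingerprints, since an injective map preserves multiset inequality. This in particular holds at position $i$.

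\textbf{(b) Layer 2 collision.} Conditioned on injectivity, $\Phi_1(\mathcal{S}_1[i]) \ne \Phi_1(\mathcal{S}_2[i])$. Because $r_2$ is independent of $r_1$, the Layer 2 bound gives collision probability at most $p_1/p_2$. It suffices to handle this one witness position $i$, since equality everywhere requires equality there.

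The total error is therefore at most $K^2 m/p_1 + p_1/p_2$. Choosing primes $p_1 = \Theta(K^2 m^2)$ (polynomial in $m$ and $|\Sigma|$, with $|\Sigma| \le M$) and $p_2 \ge 2m\,p_1$ makes each term at most $1/(2m)$, for total error at most $1/m$. These primes have $O(\log(m|\Sigma|))$ bits, so arithmetic stays in $O(1)$ words.

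The main obstacle I expect is the bookkeeping around Layer 1. The stated Layer 2 bound $p_1/p_2$ requires $p_2 \gg p_1$, and the paper's choice of primes must be consistent with this. The injectivity argument also needs a union bound over all pairs rather than only the pair at position $i$, because a collision between an offset set at position $i$ and an unrelated one could turn unequal multisets into equal fingerprint multisets. I would state the prime sizes explicitly so the sum of the two terms is at most $m^{-1}$.
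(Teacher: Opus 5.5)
Your proof is correct and follows the paper's argument. Completeness is deterministic via Lemma~\ref{lem:offset-correct}, and soundness comes from a union-bounded Layer~1 collision event (making $\phi_1$ injective on the relevant offset sets) followed by a Karp--Rabin Layer~2 collision bound with universe $p_1$. The one difference is that you bound the Layer~2 collision only at the single witness position $i^*$, using independence of $r_2$ from $r_1$, instead of the paper's union bound over all $m$ positions; this is valid and lets you take $p_2=\Theta(m\,p_1)$ rather than the paper's $\Theta(m^2 p_1)$.
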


\begin{proof}
\textbf{Completeness:}
Assume $\mathcal{S}_1 \approx_p \mathcal{S}_2$. By Lemma~\ref{lem:offset-correct}, we know $\widehat{\mathcal{S}_1} = \widehat{\mathcal{S}_2}$.
This implies that for every position $i$ and every $\sigma \in \mathcal{S}_1[i]$, there exists a corresponding $\tau \in \mathcal{S}_2[i]$ such that $\text{OffSet}_{\mathcal{S}_1}(\sigma) = \text{OffSet}_{\mathcal{S}_2}(\tau)$, and vice versa.
It follows deterministically that:
\begin{itemize}
\item $\Phi_1(\mathcal{S}_1[i]) = \Phi_1(\mathcal{S}_2[i])$ for all $i$ (as sets), and consequently,
\item $M^{\mathcal{S}_1}_i = M^{\mathcal{S}_2}_i$ for all $i$.
\end{itemize}
Thus, the algorithm correctly returns \texttt{true}.

\textbf{Soundness:}
Assume $\mathcal{S}_1 \not\approx_p \mathcal{S}_2$. By Lemma~\ref{lem:offset-correct}, $\widehat{\mathcal{S}_1} \neq \widehat{\mathcal{S}_2}$.
Thus, there must exist some position $i^*$ where the sets of offset sets differ: $\widehat{\mathcal{S}_1}[i^*] \neq \widehat{\mathcal{S}_2}[i^*]$.
Let $\mathcal{E}_1$ be the event that "there exist distinct offset sets $O_1, O_2$ with $\phi_1(O_1) = \phi_1(O_2)$."
By choosing $p_1 = \Theta\left( {m^2|\Sigma|^2} \right)$ and taking the union bound over all $\binom{|\Sigma|}{2} \leq |\Sigma|^2$ pairs of distinct characters, we bound the probability of this collision:
\[
\Pr[\mathcal{E}_1] \leq |\Sigma|^2 \cdot \frac{m}{p_1} = O\left(\frac{m|\Sigma|^2}{m^2 |\Sigma|^2}\right) = O\left(\frac{1}{m}\right).
\]

Conditioned on $\neg \mathcal{E}_1$, the hash function $\phi_1$ is injective on the offset sets appearing in $\mathcal{S}_1$ and $\mathcal{S}_2$.
Since $\widehat{\mathcal{S}_1}[i^*] \neq \widehat{\mathcal{S}_2}[i^*]$, it must be that $\Phi_1(\mathcal{S}_1[i^*]) \neq \Phi_1(\mathcal{S}_2[i^*])$ as sets of fingerprints.
Next, let $\mathcal{E}_2$ be the event that "there exist distinct sets $F_1, F_2 \subseteq \mathbb{Z}_{p_1}$ with $\phi_2(F_1) = \phi_2(F_2)$."
Choose $p_2 = \Theta\left( {m^4|\Sigma|^2} \right)$. We are comparing $m$ multisets (one for each position $i \in [1..m]$).
For any specific position, the collision probability is bounded by $p_1/p_2$ (by Lemma~\ref{lem:kr-collision} with universe $p_1$).
Applying the union bound over all $m$ positions:
\[
\Pr[\mathcal{E}_2 \mid \neg \mathcal{E}_1] \leq m \cdot \frac{p_1}{p_2} =  O\left( \frac{m^3|\Sigma|^2}{m^4|\Sigma|^{2}}\right) = O\left(\frac{1}{m}\right).
\]

Conditioned on $\neg \mathcal{E}_1 \wedge \neg \mathcal{E}_2$, we are guaranteed that $M^{\mathcal{S}_1}_{i^*} \neq M^{\mathcal{S}_2}_{i^*}$, and so the algorithm returns \texttt{false}.
The overall error probability is bounded by the sum of probabilities of these collision events:
\[
\Pr[\mathcal{E}_1 \vee \mathcal{E}_2] \leq \Pr[\mathcal{E}_1] + \Pr[\mathcal{E}_2 \mid \neg \mathcal{E}_1] = O\left(\frac{1}{m}\right).
\]
\end{proof}

This concludes the proof of Theorem~\ref{thm:comparison}, with a summary appearing in the index.

\section{Set Parameterized Matching Algorithm}\label{sec:algorithm}

We now present our main algorithm for Problem~\ref{prob:set-pm}, the problem of Set Parameterized Matching, proving Theorem~\ref{thm:main}.
\subsection{The Rolling Hash Challenge}

Recall from Section~\ref{sec:prelim} (Definition~\ref{def:rolling-hash}) that the Rabin-Karp rolling hash allows for efficient $O(1)$ updates when sliding a window over a fixed sequence of integers.
In our context, the natural candidate for such a sequence is the sequence of Layer 2 mashed values, $M^{\mathcal{T}}_1, M^{\mathcal{T}}_2, \dots, M^{\mathcal{T}}_n$.
A direct application of the standard rolling hash for a window of size $m$ starting at text position $i$ would be:
\[
H = \sum_{j=0}^{m-1} M^{\mathcal{T}}_{i+j} \cdot r_3^j \bmod p_3
\]
where $p_3$ is some prime, $r_3 \in [p_3]$, and $M^{\mathcal{T}}_{k}$ represents the Layer 2 hash of the offset sets at text position $k$.

\subsubsection*{The Dynamic Offset Problem}
However, this standard approach fails because the values $M^{\mathcal{T}}_k$ are \emph{not static}.
In set parameterized matching, the offset set $\text{OffSet}_{\mathcal{T}[i..i+m-1]}(\sigma)$ is defined relative to the \emph{first occurrence} of $\sigma$ inside the current window.
When the window slides from $i$ to $i+1$, if a character $\sigma$ is dropped from position $i$, the "first occurrence" of $\sigma$ changes to its next appearance in the window (say, at relative distance $\delta$).
Consequently, the offsets of \emph{all} remaining occurrences of $\sigma$ in the window must be decreased by $\delta$.
This changes the Layer 1 hashes ($\psi$) and, crucially, changes the Layer 2 mashed values ($M^{\mathcal{T}}$) for every position where $\sigma$ appears.
Updating the rolling hash $H$ would thus require recomputing up to $O(m)$ terms in the summation, which is prohibitively slow.
\subsection{Character-Centric Rolling Hash}

To handle these dynamic updates efficiently, we restructure the hashing scheme.
Instead of viewing $H$ as a sum over window positions (which is sensitive to shifting offsets), we swap the order of summation to view it as a sum over alphabet characters.
Recall that for the window starting at location $i$, $M^{\mathcal{T}}_{i+j} = \sum_{\sigma \in \mathcal{T}[i+j]} r_2^{\psi^{(i)}[\sigma]}$.
Substituting this into the standard rolling hash definition:
\[
H = \sum_{j=0}^{m-1} r_3^j \left( \sum_{\sigma \in \mathcal{T}[i+j]} r_2^{\psi^{(i)}[\sigma]} \mod p_2 \right) \mod p_3
\]
By swapping the summation order, we group terms by character rather than by position:
\[
H = \sum_{\sigma \in \Sigma} \left( r_2^{\psi^{(i)}[\sigma]} \bmod p_2 \right) \left( \sum_{j : \sigma \in \mathcal{T}[i+j]} r_3^j \right) \bmod p_3
\]
This rearrangement separates the \emph{structural shape} of the encoding of $\sigma$ (encoded by $r_2^{\psi^{(i)}[\sigma]}$) from the \emph{geometric positions} of the encoding of $\sigma$ (encoded by the inner sum over $r_3$).
This separation allows us to update the two components independently.

This motivates our formal definition of the components for the character-centric hash.
We first isolate the ``geometric'' component—the inner sum representing the positions of a character.
\begin{definition}[Character Position Fingerprint]\label{def:pos-fingerprint}
Let $p_2$ be a prime and let $r_2 \in \mathbb{Z}_{p_2}$ be chosen uniformly at random. For a character $\sigma$ in the window starting at text position $i$, let its \emph{position fingerprint} be the sum of the position powers for all its occurrences:
\[
P_i(\sigma) := \sum_{k \in \text{Occurrences of } \sigma \text{ in window}} r_3^k \bmod p_3
\]
where $k \in \{0, \dots, m-1\}$ is the relative index of the occurrence within the window.
\end{definition}

With the geometric component isolated, the total window hash can be expressed as a weighted sum of these fingerprints, where the weights are determined by the structural shape (offset sets) of each character.
\begin{definition}[Layer 3 Hash]\label{def:layer3-hash}
The character-centric Layer 3 hash for the window at text position $i$ is:
\[
H^{\mathcal{T}}_i := \sum_{\sigma \in \Sigma} \left( r_2^{\psi^{(i)}[\sigma]} \bmod p_2 \right) \cdot P_i(\sigma) \bmod p_3
\]
where $\psi^{(i)}[\sigma]$ is the Layer 1 hash of the offset set of $\sigma$ in window $i$.
\end{definition}

\subsection{Efficient Transition Logic}\label{sec:incremental-update}

To achieve our time complexity goals, we must compute $H^{\mathcal{T}}_{i+1}$ from $H^{\mathcal{T}}_i$ in time proportional only to the number of characters entering or leaving the window, independent of the alphabet size $|\Sigma|$.
When sliding the window from $i$ to $i+1$, two distinct types of updates occur:
\begin{enumerate}
    \item \textbf{Geometric Shift:} Every character remaining in the window shifts left by one position.
This implies that for \emph{every} $\sigma \in \Sigma$, the position fingerprint changes: $P_{i+1}(\sigma) = P_i(\sigma) \cdot r_3^{-1}$.
\item \textbf{Structural Change:} For characters in $\mathcal{T}[i]$ (leaving) and $\mathcal{T}[i+m]$ (entering), the offset structure $\psi^{(i)}[\sigma]$ may change to a new value $\psi^{(i+1)}[\sigma]$.
\end{enumerate}

A naive update is impossible because we cannot iterate over all $\sigma \in \Sigma$ to apply the geometric shift $r_3^{-1}$, nor can we simply multiply the total hash $H^{\mathcal{T}}_i$ by $r_3^{-1}$ (since the structural weights $r_2^{\psi[\sigma]}$ for active characters would be incorrectly shifted).
Instead, we decompose the total hash into a \emph{Stable Component} (characters with unchanged structure) and an \emph{Active Component} (characters appearing at the window boundaries).
We process the transition in three steps: \emph{Isolate}, \emph{Shift}, and \emph{Reintegrate}.
Let $S_{active} = \mathcal{T}[i] \cup \mathcal{T}[i+m]$ be the set of characters undergoing structural updates.
\subsubsection*{Step 1: Isolate (Remove Active Terms)}
First, we remove the contributions of all "active" characters from the current hash to isolate the "stable" portion.
For any stable character $\sigma \notin S_{active}$, the structural hash $\psi$ remains invariant ($\psi^{(i)}[\sigma] = \psi^{(i+1)}[\sigma]$).
\[
H_{stable} = H^{\mathcal{T}}_i - \sum_{\sigma \in S_{active}} \left( r_2^{\psi^{(i)}[\sigma]} \bmod p_2 \right) \cdot P_i(\sigma) \bmod p_3
\]
This step requires $O(|S_{active}|)$ operations.
\subsubsection*{Step 2: Global Shift (Update Stable Terms)}
The remaining hash $H_{stable}$ consists essentially of a sum of terms $C_\sigma \cdot P_i(\sigma)$ where $C_\sigma$ is constant.
The only required update is the geometric shift of positions ($k \to k-1$).
By the linearity of the sum, we can apply this shift globally:
\[
H'_{stable} = H_{stable} \cdot r_3^{-1} \bmod p_3
\]
This single multiplication updates the hash for all non-active characters in $O(1)$ time.
\textbf{Lazy Fingerprint Maintenance:} To support this global shift without updating individual $P(\sigma)$ values in memory, we maintain a global rolling multiplier $R_{glob}$.
The stored value for a fingerprint is related to the actual value by $P_{actual}(\sigma) = P_{stored}(\sigma) \cdot R_{glob}$.
In this step, we simply update $R_{glob} \leftarrow R_{glob} \cdot r_3^{-1}$.
\subsubsection*{Step 3: Reintegrate (Add Updated Active Terms)}
Finally, we compute the new states for the active characters and add them back to the shifted stable hash.
For each $\sigma \in S_{active}$:
\begin{itemize}
    \item \textbf{Update Position:} We update $P_{stored}(\sigma)$ to reflect any removals (at relative index 0) or insertions (at relative index $m-1$).
Note that we must multiply any added terms by $R_{glob}^{-1}$ to align with the current global frame.
\item \textbf{Update Offset Structure:}
    \begin{itemize}
        \item \emph{Leaving ($\sigma \in \mathcal{T}[i]$):} We must remove the first occurrence.
Let $\delta$ be the distance to the \emph{next} occurrence of $\sigma$ in the window.
The structure updates by shifting the exponents:
        \[ \psi^{(i+1)}[\sigma] \leftarrow (\psi^{(i)}[\sigma] - r_1^0) \cdot r_1^{-\delta} \bmod p_1 \]
        (If $\sigma$ does not appear again, $\psi^{(i+1)}[\sigma]$ resets to 0).
\item \emph{Entering ($\sigma \in \mathcal{T}[i+m]$):} We must add the new occurrence at the end of the window.
Let $k$ be the index of the \emph{first occurrence} of $\sigma$ in the new window (if $\sigma$ is new to the window, $k=m-1$).
The new relative offset to add is $\delta_{new} = (m-1) - k.$
\[ \psi^{(i+1)}[\sigma] \leftarrow \psi^{(i+1)}[\sigma] + r_1^{\delta_{new}} \bmod p_1 \]
    \end{itemize}
\end{itemize}

We then add the new contribution to the total hash:
\[
H^{\mathcal{T}}_{i+1} = H'_{stable} + \sum_{\sigma \in S_{active}} \left( r_2^{\psi^{(i+1)}[\sigma]} \bmod p_2 \right) \cdot (P_{stored}(\sigma) \cdot R_{glob}) \bmod p_3
\]
This step also takes time $O(|S_{active}|)$.

Note:

(1) The definition of offset sets comes into its main play here, that is, the update of $\psi^{(i+1)}[\sigma]$ is computable in $O(1)$ time because of the structure of the offset set.

(2) In the computation of the entering characters, we assume knowledge of the \emph{first occurrence} of $\sigma$.
We do this by saving linked lists of the locations of each $\sigma$ and traverse them while shifting, changing pointers to current \emph{first occurrence} in each list when characters leave the window.
For the sake of brevity, we keep the details to a minimum.
\begin{theorem}[Transition Complexity]
The transition from $H^{\mathcal{T}}_i$ to $H^{\mathcal{T}}_{i+1}$ can be computed in time $O(1 + |\mathcal{T}[i]| + |\mathcal{T}[i+m]|)$.
\end{theorem}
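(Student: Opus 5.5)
The plan is to charge each of the three steps of the transition (Isolate, Shift, Reintegrate) to $|S_{active}| \le |\mathcal{T}[i]| + |\mathcal{T}[i+m]|$, plus $O(1)$ global work. The argument rests on maintaining the following data across transitions:
\begin{itemize}
\item for each $\sigma$, the stored values $\psi[\sigma]$ and $P_{stored}(\sigma)$, kept in an array indexed by $\Sigma = \{1,\dots,|\Sigma|\}$ and initialised once during preprocessing;
\item the global multiplier $R_{glob}$ together with its inverse;
\item the precomputed constants $r_3^{-1}$, $r_1^{-1}$, and $r_3^{m-1}$;
\item for each $\sigma$, the linked list of its occurrence positions in $\mathcal{T}$, with a pointer to the first occurrence inside the current window.
\end{itemize}
These occurrence lists are built once in $O(N)$ time, exactly as in Lemma~\ref{lem:offset-computation}. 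The invariant to carry through the argument is that after the transition, $\sum_{\sigma} (r_2^{\psi[\sigma]} \bmod p_2)\cdot P_{stored}(\sigma) R_{glob}$ equals $H^{\mathcal{T}}_{i+1}$ as in Definition~\ref{def:layer3-hash}.

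First, forming $S_{active}$ costs $O(|\mathcal{T}[i]| + |\mathcal{T}[i+m]|)$; I would deduplicate it with a timestamp array so that a character lying in both boundary sets is processed only once. Step 1 subtracts one term per active character. Each term needs $P_i(\sigma) = P_{stored}(\sigma) R_{glob}$ and $r_2^{\psi^{(i)}[\sigma]}$, the latter being a modular exponentiation. To keep this $O(1)$ rather than $O(\log p_1)$, I would store alongside $\psi[\sigma]$ the cached value $w_\sigma = r_2^{\psi[\sigma]} \bmod p_2$. Under the word-RAM assumption that arithmetic on $O(\log n)$-bit words is constant time, one exponentiation per update of $\psi$ is then acceptable only if it is itself $O(1)$. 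This is the point I expect to be the main obstacle. My first approach is to argue that $\psi$ changes only $O(|S_{active}|)$ times per step, so the total number of exponentiations over the whole run is $O(N)$. With word size $\Theta(\log n)$, an exponent below $p_1 = \mathrm{poly}(n)$ can be handled by a two-level table of powers of $r_2$ with base $2^{\Theta(\log n)/2}$, costing $O(\sqrt{p_1})$ preprocessing. Since that preprocessing cost may be too large, the alternative I would fall back on is to charge the exponentiation to the word-RAM model as the paper implicitly does.

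Step 2 is a single multiplication of $H_{stable}$ by $r_3^{-1}$ and of $R_{glob}$ by $r_3^{-1}$. The point to check here is that every stored $P_{stored}(\sigma)$, including those of the stable characters that are never touched, now represents the correctly shifted fingerprint. This follows directly from the invariant $P_{actual} = P_{stored} R_{glob}$.

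Step 3 does two things for each active $\sigma$. It adjusts $P_{stored}(\sigma)$ by subtracting $r_3^{0}$ for a leaving character (taken in the old frame, before the shift) or adding $r_3^{m-1} R_{glob}^{-1}$ for an entering one, each in $O(1)$ time. It also updates $\psi[\sigma]$ by the stated formulas. The leaving case needs $\delta$, which is read off as the gap to the successor in $\sigma$'s occurrence list, after which the window pointer is advanced. It also needs $r_1^{-\delta}$, which I would obtain from an array of precomputed powers $r_1^{-d}$ for $d < m$, built in $O(m)$ time. The entering case reads $k$ from the same pointer and uses precomputed $r_1^{\delta_{new}}$ with $\delta_{new} < m$. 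Finally we add back one term per active character. When a character both leaves and enters, the leave update is applied before the entry update, so that $k$ refers to the new window; I would verify this order explicitly. Summing the three steps gives $O(1 + |\mathcal{T}[i]| + |\mathcal{T}[i+m]|)$.
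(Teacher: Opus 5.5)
Your proposal takes the same route as the paper, and it is correct once the exponentiation issue you flag is closed as described below. The paper's proof only observes that Step~2 costs $O(1)$, that Steps~1 and~3 touch only the characters of $\mathcal{T}[i]\cup\mathcal{T}[i+m]$, and that all internal operations are $O(1)$. Your deduplication, the precomputed powers $r_1^{\pm d}$ for $d<m$, and the maintained $R_{glob}^{-1}$ supply details the paper leaves implicit. Your choice to remove the leaving term in the old frame is also the right one. Algorithm~\ref{alg:main} subtracts $r_3^0\cdot R_{glob}^{-1}$ after $R_{glob}$ has already been multiplied by $r_3^{-1}$, so it removes the term at relative index $0$ of the new window instead of the leaving occurrence. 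That is a correctness issue, though, not a time issue.

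The obstacle you identify is real. The paper silently counts $r_2^{\psi[\sigma]}\bmod p_2$ as $O(1)$, even though the exponent ranges up to $p_1=\Theta(n^2|\Sigma|^2m)$. You are right that your two-level table is too expensive: $\sqrt{p_1}=\Theta(n|\Sigma|\sqrt{m})$ is superlinear. You do not need the fallback, however; a constant number of levels suffices:
\begin{itemize}
\item Since $p_1$ is polynomial in the input size, take $B$ to be a power of two of the order of the input size.
\item Every exponent $e<p_1$ then has only $t=\lceil\log_B p_1\rceil=O(1)$ base-$B$ digits, which you can extract with shifts and masks.
\item Precompute $r_2^{dB^j}\bmod p_2$ for all $0\le d<B$ and $0\le j<t$. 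This is a one-time cost of $O(tB)$, which is linear.
\item Afterwards each $r_2^{e}$ is a product of $t$ table entries.
\end{itemize}
With this table, every exponentiation needed in Steps~1 and~3 costs $O(1)$ in the standard word-RAM. The bound $O(1+|\mathcal{T}[i]|+|\mathcal{T}[i+m]|)$ then holds with no extra model assumption, and caching $w_\sigma$ becomes optional.
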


\begin{proof}
Step 2 takes $O(1)$ time. Steps 1 and 3 involve iterating only over the characters present at the leaving position $i$ and the entering position $i+m$.
All internal lookups (offset lists, hash maps) take $O(1)$ time.
Thus, the total time is proportional to the number of character occurrences added or removed.
\end{proof}

The full algorithm, depicting the above, is detailed in Algorithm~\ref{alg:main} in the appendix.

\section{Analysis}\label{sec:analysis}

\subsection{Correctness Analysis}

We now prove that the hashing scheme correctly identifies matches with high probability.
\begin{theorem}[Correctness]\label{thm:main-correct}
Algorithm~\ref{alg:main} reports all valid set-parameterized matches and reports false positives with probability at most $n^{-1}$.
\end{theorem}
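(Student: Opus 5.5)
The proof splits into completeness (no false negatives, deterministic) and soundness (false positives bounded by a union bound over collision events). The main tool is an invariant: after every transition, the maintained value $H^{\mathcal{T}}_i$ equals the Layer 3 hash of Definition~\ref{def:layer3-hash} for the window $\mathcal{T}[i..i+m-1]$. The pattern hash $H^{\mathcal{P}}$ is computed by the same formula applied to $\mathcal{P}$ with the same $r_1,r_2,r_3$.

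\textbf{Step 1: the invariant.} I would prove it by induction on $i$.
\begin{itemize}
\item The base case is a direct computation: Lemma~\ref{lem:offset-computation} gives the offset sets of the first window, from which we form $\psi^{(1)}$ and $P_1$.
\item For the inductive step, Steps 1--3 of Section~\ref{sec:incremental-update} must yield exactly $H^{\mathcal{T}}_{i+1}$.
\item A stable character $\sigma\notin S_{active}$ keeps its offset set, since its first occurrence in the window is unchanged and no occurrence is added or dropped. Its positions all shift by $-1$, so $P_{i+1}(\sigma)=r_3^{-1}P_i(\sigma)$, which justifies the global multiplication.
\item For a leaving character, the new offset set is $\{x-\delta : x\in O\setminus\{0\}\}$, so $\phi_1$ transforms as $(\psi-1)r_1^{-\delta}$.
\item For an entering character, we add $r_1^{\delta_{new}}$.
\item Both updates follow from linearity of $\phi_1$ in the indicator vector of the offset set.
\end{itemize}

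\textbf{Step 2: completeness.} Once the invariant holds, the Layer 3 hash is, after swapping the summation back, a function of the mashed sequence $(M_{i},\dots,M_{i+m-1})$ of the window. If $\mathcal{P}\approx_p\mathcal{T}[i..i+m-1]$, Lemma~\ref{lem:offset-correct} gives equal multiset offset representations. The Layer 1 and Layer 2 values therefore coincide position by position, exactly as in the completeness part of Theorem~\ref{thm:compare-correct}, so $H^{\mathcal{P}}=H^{\mathcal{T}}_i$ deterministically and every true match is reported.

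One subtlety must be handled here. The swap of summation mixes arithmetic modulo $p_2$ and modulo $p_3$. The Layer 3 hash is therefore $h^{seq}_{p_3,r_3}$ applied to the sequence of integers $M_{i+j}\in[0,p_2)$, each reduced mod $p_2$ first and then summed mod $p_3$. I would state this explicitly so that Lemma~\ref{lem:kr-collision} applies verbatim.

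\textbf{Step 3: soundness.} If there is no match, some position $j^*$ has $\widehat{\mathcal{T}[i..i+m-1]}[j^*]\neq\widehat{\mathcal{P}}[j^*]$. We need three events to fail:
\begin{itemize}
\item $\mathcal{E}_1$: a Layer 1 collision among distinct offset sets. The number of offset sets appearing over all windows is at most $N+M$, because each is determined by a character and a window-start occurrence, so there are $O((N+M)^2)$ pairs, each colliding with probability at most $m/p_1$.
\item $\mathcal{E}_2$: a Layer 2 collision. We union over all windows and positions, $O(nm)$ pairs of multisets, each colliding with probability at most $p_1/p_2$.
\item $\mathcal{E}_3$: a Layer 3 sequence collision for some window. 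There are $n$ windows, each colliding with probability at most $(m-1)/p_3$.
\end{itemize}
Choosing $p_1=\Theta(n(N+M)^2m)$, $p_2=\Theta(n^2mp_1)$ and $p_3=\Theta(n^2m)$ makes each term at most $1/(3n)$. Words of size $O(\log(N+M))$ suffice, so arithmetic stays $O(1)$ per operation. Conditioned on none of these events, the window hashes differ and no false positive is reported; the union bound gives error at most $1/n$.

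The main obstacle is the union-bound accounting in Step 3. The Layer 1 and Layer 2 events must be bounded over \emph{all} windows, not per comparison, since the hashed objects change dynamically; with the count above, the mixed-modulus issue noted in Step 2 is the other place that needs care.
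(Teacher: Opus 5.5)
\textbf{Gap in Layer 3 (Step 3).} Your soundness argument breaks at its last link. This is exactly the mixed-modulus point you flagged, and the resolution you propose does not work with your parameters.

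The root-counting proof of the sequence case of Lemma~\ref{lem:kr-collision} needs the difference polynomial to be nonzero \emph{over} $\mathbb{Z}_{p_3}$, i.e.\ the two coefficient sequences must differ modulo $p_3$. By Definition~\ref{def:layer3-hash}, the reduction modulo $p_2$ is applied per character. So the coefficient of $r_3^{j}$ in $H^{\mathcal{T}}_i$ is the integer $\sum_{\sigma\in\mathcal{T}[i+j]}(r_2^{\psi^{(i)}[\sigma]}\bmod p_2)$, which lies in $[0,|\mathcal{T}[i+j]|\,p_2)$, and similarly for $H^{\mathcal{P}}$. Conditioned on $\neg\mathcal{E}_1\wedge\neg\mathcal{E}_2$, you know only that at the bad position these two integers differ modulo $p_2$.

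Since you take $p_3=\Theta(n^2m)$, far below $p_2=\Theta(n^2mp_1)$, the two integers may well be congruent modulo $p_3$. If the sequences agree modulo $p_3$, then $H^{\mathcal{T}}_i=H^{\mathcal{P}}$ for \emph{every} $r_3$. Nothing in your argument bounds the probability over $r_2$ of such a congruence; it is not a root-counting event. So the bound $(m-1)/p_3$ for $\mathcal{E}_3$ is unjustified, and the lemma does not apply ``verbatim''.

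The repair is cheap. Either take $p_3>(N+M)\,p_2$, so that distinct per-position sums stay distinct in $\mathbb{Z}_{p_3}$. Or let Layers~2 and~3 share one prime $p_3=p_2$, so the coefficients are exactly the mashed values in $\mathbb{Z}_{p_2}$. Both keep all primes polynomial (words of $O(\log(N+M))$ bits) and still give $n(m-1)/p_3\le 1/(3n)$. The paper's own proof, with $p_3=\Theta(n^2m)$ and $p_2=\Theta(n^4|\Sigma|^2m^2)$, has the same defect, so following it does not help here.

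\textbf{Two smaller points.} First, your justification of the Layer~1 count is false. A character's offset set in a window is not determined by its first occurrence there; it also depends on where the window ends. With $\sigma$ at positions $5$ and $7$ and $m=4$, the windows $\mathcal{T}[3..6]$ and $\mathcal{T}[4..7]$ give $\{0\}$ and $\{0,2\}$. The bound survives: a slide changes only the offset sets of the at most $|\mathcal{T}[i]|+|\mathcal{T}[i+m]|$ active characters, so at most $3N+M$ offset sets ever appear. Alternatively, use the paper's per-window count of $O(|\Sigma|^2)$ pairs.

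Second, in the inductive step you check only the $\psi$-updates. The position fingerprints must be checked too, and the pseudocode of Algorithm~\ref{alg:main} gets them wrong. After $R_{glob}\gets R_{glob}\,r_3^{-1}$, the leaving occurrence contributes $r_3^{-1}$ to the actual fingerprint. So one must subtract $r_3^{-1}R_{glob}^{-1}$ from $P_{stored}(\sigma)$, not $r_3^{0}R_{glob}^{-1}$. Your invariant, and hence completeness, holds for the corrected update, provided $r_1,r_3\neq0$ so that the inverses exist.
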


\begin{proof}
The matching algorithm effectively performs a set-string comparison at each of the $n-m+1$ text windows.
We analyze the error probability by extending the proof of Theorem~\ref{thm:compare-correct} by choosing primes:
$p_1 = \Theta(n^2|\Sigma|^2 m)$, $p_2 = \Theta(n^4|\Sigma|^2 m^2)$, and $p_3 = \Theta(n^2m)$.

\begin{enumerate}
    \item \textbf{Layers 1 \& 2 (Structural Collision):}
    By the analysis in Theorem~\ref{thm:compare-correct}, the probability of a collision in the structural hashing (Layers 1 and 2) for any single window is bounded by:
    \[
    \Pr[\mathcal{E}_{struct}] \leq \Pr[\mathcal{E}_1] + \Pr[\mathcal{E}_2] \le \frac{m|\Sigma|^2}{p_1} + \frac{m p_1}{p_2} = O\left(\frac{1}{n^2}\right) + O\left(\frac{1}{n^2}\right) = O\left(\frac{1}{n^2}\right).
    \]

    \item \textbf{Layer 3 (Rolling Hash Collision):}
    The rolling hash $H$ is a polynomial in $r_3$ of degree at most $m$. For any single window, using Lemma~\ref{lem:kr-collision}, the collision probability is:
    \[
    \Pr[\mathcal{E}_{roll}] \leq \frac{m}{p_3} = O\left(\frac{m}{n^2m}\right) = O\left(\frac{1}{n^2}\right).
    \]
\end{enumerate}

\noindent \textbf{Total Error:}
Applying the union bound over all $O(n)$ sliding windows, the total probability of a false positive is:
\[
\Pr[\text{Error}] \leq n \cdot \left( \Pr[\mathcal{E}_{struct}] + \Pr[\mathcal{E}_{roll}] \right)
= n \cdot O\left(\frac{1}{n^2}\right)
= O\left(\frac{1}{n}\right).
\] 
\end{proof}

\subsection{Complexity Analysis}

Let $N = \sum_{i=1}^n |\mathcal{T}[i]|$ and $M = \sum_{j=1}^m |\mathcal{P}[j]|$. The following theorem summarizes the result. Details appear in the appendix.
\begin{theorem}[Complexity]\label{thm:main-complexity}
Algorithm~\ref{alg:main} runs in time $O(N + M)$ and uses space $O(N + M)$.
\end{theorem}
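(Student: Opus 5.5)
The plan is to account separately for three phases of Algorithm~\ref{alg:main}: preprocessing the pattern, initializing the first text window, and sliding the window across the text. Throughout, word-RAM arithmetic modulo $p_1,p_2,p_3$ takes $O(1)$ time, since each prime is polynomial in $n$ and $|\Sigma|$ and so fits in $O(1)$ machine words. We also assume, as in the paper, that after renaming $|\Sigma| \le N+M$; only characters that actually occur matter, and the rest can be discarded by hashing.

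\textbf{Pattern phase.} By Lemma~\ref{lem:offset-computation}, all offset sets of $\mathcal{P}$ are computed in $O(M)$ time and space, and their total size is exactly $M$. The Layer~1 value $\phi_1$ of each offset set is a sum of powers $r_1^x$ with $x<m$. We precompute $r_1^0,\dots,r_1^{m-1}$ in $O(m)=O(M)$ time (assuming no empty positions, or by compressing them), so all Layer~1 values cost $O(M)$.

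The Layer~2 weight $r_2^{\phi_1}$ needs one modular exponentiation per distinct character. Fast exponentiation would give $O(\log p_1)$ per character, which breaks linearity. I would therefore use the character-centric form of $H$ directly: $H^{\mathcal{P}}=\sum_\sigma r_2^{\psi[\sigma]}P(\sigma)$, where each $P(\sigma)$ is built from the precomputed powers of $r_3$ in time proportional to the occurrences of $\sigma$. The exponentiation issue is handled below.

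\textbf{Text initialization and sliding.} The first window $\mathcal{T}[1..m]$ is processed exactly like the pattern, in time $O(\sum_{j\le m}|\mathcal{T}[j]|)$. We also build, for each $\sigma$, the linked list of its occurrences in $\mathcal{T}$ in a single $O(N)$ pass. By the Transition Complexity theorem, moving from window $i$ to $i+1$ costs $O(1+|\mathcal{T}[i]|+|\mathcal{T}[i+m]|)$. Summing over $i\in[1..n-m]$ telescopes to $O(n+2N)$: each text position leaves the window at most once and enters it at most once. Pointer advancement along the occurrence lists is amortized, because each first-occurrence pointer only moves forward, giving $O(N)$ overall. The powers $r_1^{-\delta}$ and $r_3^{-1}$ come from tables of inverse powers precomputed in $O(m)$ time. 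Comparing each $H^{\mathcal{T}}_i$ with $H^{\mathcal{P}}$ is $O(1)$ per window. The total is $O(n+N+M)$, which is $O(N+M)$ when empty positions are discounted. Space is dominated by the lists, the hash maps indexed by characters present, and the power tables, for $O(N+M)$ in total.

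\textbf{Main obstacle.} The delicate step is evaluating $r_2^{\psi}$ for a freshly updated $\psi\in\mathbb{Z}_{p_1}$ in $O(1)$ time, since $\psi$ is an arbitrary field element and not a small index. My first approach is a two-level power table: write $\psi=a\cdot B+b$ with $B=\lceil\sqrt{p_1}\rceil$, and precompute $r_2^{b}$ and $r_2^{aB}$. This needs $O(\sqrt{p_1})$ entries, which is polynomial in $n$ and therefore too large, so I would refine it to $c$ levels with base $p_1^{1/c}$ for a constant $c$ chosen so that $p_1^{1/c}=O(N+M)$. Each exponentiation then costs $O(c)=O(1)$ multiplications, and preprocessing costs $O(N+M)$. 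With this lemma in place, all remaining accounting is the telescoping sum above.
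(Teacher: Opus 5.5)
Your proposal is correct and follows the same accounting as the paper. Both use linear-time preprocessing for the occurrence lists and the pattern and first-window hashes, then sum the per-transition cost $O(1+|\mathcal{T}[i]|+|\mathcal{T}[i+m]|)$, charging each text occurrence once for entering and once for leaving the window. Two points you handle more carefully than the paper: your constant-depth power table makes evaluating $r_2^{\psi}$ for arbitrary $\psi\in\mathbb{Z}_{p_1}$ genuinely $O(1)$, which the paper only asserts ("all internal hash updates take $O(1)$ time"); and you make explicit that the $O(n)$ per-window term requires $n=O(N+M)$, i.e.\ discounting empty positions, which the paper silently absorbs.
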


\bibliographystyle{plainurl}
\bibliography{references}

\appendix
\newpage
\section{Proof of Correctness for the Offset Sets}

The following is the proof of Lemma~\ref{lem:offset-correct}.

\begin{proof}
\textbf{Direction ($\Rightarrow$):}
Suppose $\mathcal{S}_1 \approx_p \mathcal{S}_2$. There exists a bijection $\pi: \Sigma \to \Sigma$ such that $\pi(\mathcal{S}_1[i]) = \mathcal{S}_2[i]$ for all $i$.
This implies that for every $\sigma$, the absolute positions of $\sigma$ in $\mathcal{S}_1$ are identical to the absolute positions of $\pi(\sigma)$ in $\mathcal{S}_2$.
Consequently, they share the same start position and the same offset set.
Thus, the multiset of offset sets at every position remains invariant, and $\widehat{\mathcal{S}_1} = \widehat{\mathcal{S}_2}$.

\textbf{Direction ($\Leftarrow$):}
Suppose $\widehat{\mathcal{S}_1} = \widehat{\mathcal{S}_2}$.
We must construct a valid bijection $\pi$.

A valid bijection must map a character $\sigma \in \mathcal{S}_1$ to a character $\tau \in \mathcal{S}_2$ if and only if they are structurally identical.
Structural identity requires two conditions:
\begin{enumerate}
    \item They have the same \textbf{Offset Set} $O$ (relative spacing).
    \item They have the same \textbf{Start Position} $s$ (absolute placement).
\end{enumerate}

Let $N^{\mathcal{S}}(s, O)$ denote the number of distinct characters in string $\mathcal{S}$ that start at position $s$ and have offset set $O$.
To prove a bijection exists, it is sufficient to show that for all positions $s \in [1..m]$ and all offset sets $O$:
\[
N^{\mathcal{S}_1}(s, O) = N^{\mathcal{S}_2}(s, O)
\]
If this equality holds, we can map the $k$ characters of type $(s, O)$ in $\mathcal{S}_1$ arbitrarily to the $k$ characters of type $(s, O)$ in $\mathcal{S}_2$.
\textbf{Computing $N(s, O)$ from the Representation:}
We show that $N^{\mathcal{S}}(s, O)$ is uniquely determined by the sequence of multisets $\widehat{\mathcal{S}}$.
For any position $i$ and offset set $O$, let $Count^{\mathcal{S}}(i, O)$ be the multiplicity of $O$ in the multiset $\widehat{\mathcal{S}}[i]$.
A character $\sigma$ contributes to the count of $O$ at position $i$ if:
\begin{enumerate}
    \item $\text{OffSet}(\sigma) = O$.
    \item $\sigma$ appears at $i$. This happens if $\sigma$ started at some position $s \le i$ such that $(i-s) \in O$.
\end{enumerate}

We can express the total count at position $i$ as the sum of characters starting \emph{now} ($s=i$) and characters that started \emph{earlier} ($s < i$) and are re-appearing:
\[
Count^{\mathcal{S}}(i, O) = \underbrace{N^{\mathcal{S}}(i, O)}_{\text{Starts at } i} + \sum_{\delta \in O, \delta > 0} \underbrace{N^{\mathcal{S}}(i-\delta, O)}_{\text{Started at } i-\delta}
\]

Rearranging to solve for the current starts:
\[
N^{\mathcal{S}}(i, O) = Count^{\mathcal{S}}(i, O) - \sum_{\delta \in O, \delta > 0} N^{\mathcal{S}}(i-\delta, O)
\]

\textbf{Inductive Argument:}
We prove $N^{\mathcal{S}_1}(i, O) = N^{\mathcal{S}_2}(i, O)$ for all $i$ by strong induction.
\begin{itemize}
    \item \textbf{Base Case ($i=1$):} There are no offsets $\delta > 0$ relative to 1. Thus $N^{\mathcal{S}}(1, O) = Count^{\mathcal{S}}(1, O)$.
    Since $\widehat{\mathcal{S}_1} = \widehat{\mathcal{S}_2}$, the counts are identical, so $N^{\mathcal{S}_1}(1, O) = N^{\mathcal{S}_2}(1, O)$.
    \item \textbf{Inductive Step:} Assume $N^{\mathcal{S}_1}(j, O) = N^{\mathcal{S}_2}(j, O)$ for all $j < i$.
    In the equation for $N(i, O)$:
    \begin{itemize}
        \item $Count(i, O)$ is identical because $\widehat{\mathcal{S}_1}[i] = \widehat{\mathcal{S}_2}[i]$.
        \item The summation terms $N(i-\delta, O)$ involve indices strictly less than $i$.
        By the induction hypothesis, these terms are identical for $\mathcal{S}_1$ and $\mathcal{S}_2$.
    \end{itemize}
    Therefore, $N^{\mathcal{S}_1}(i, O) = N^{\mathcal{S}_2}(i, O)$.
\end{itemize}

Since the number of characters for every specific (Start Position, Offset Set) pair is identical, we can construct a perfect bijection $\pi$.
\end{proof}

\section{Proof of Theorem~\ref{thm:comparison}}

\begin{theorem}[Complexity of Comparison Algorithm]\label{thm:compare-complexity}
Let $M_{\mathcal{S}_1} = \sum_{i=1}^{m} |\mathcal{S}_1[i]|$ and $M_{\mathcal{S}_2} = \sum_{i=1}^{m} |\mathcal{S}_2[i]|$ denote the total size (number of character occurrences) of the two set-strings.
Algorithm~\ref{alg:compare} runs in time $O(M_{\mathcal{S}_1} + M_{\mathcal{S}_2})$ and uses space $O(M_{\mathcal{S}_1} + M_{\mathcal{S}_2})$.
\end{theorem}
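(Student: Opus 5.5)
We account for the cost of each phase of Algorithm~\ref{alg:compare} separately and show that each is linear in the input size $M_{\mathcal{S}_1}+M_{\mathcal{S}_2}$. The algorithm handles each set-string independently (offset sets, then Layer~1, then Layer~2) and finishes with one pass comparing the two mashed sequences. It therefore suffices to show that a set-string $\mathcal{S}$ of length $m$ and size $M$ can be mapped to $\mathcal{M}(\mathcal{S})$ in $O(M)$ time and space. Adding the $O(m)$ final comparison is harmless: positions with empty sets contribute nothing, and we may assume $m = O(M_{\mathcal{S}_1}+M_{\mathcal{S}_2})$ by treating each position as a unit of input. Likewise we may assume $|\Sigma| = O(M)$, since characters that never appear can be dropped after renaming, as in the proof of Lemma~\ref{lem:offset-computation}.

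\textbf{Step 1 (offset sets).} By Lemma~\ref{lem:offset-computation}, all $\text{OffSet}_{\mathcal{S}}(\sigma)$ are computed in $O(M)$ time and space. Their total size is $\sum_\sigma |\text{OffSet}_{\mathcal{S}}(\sigma)| = M$, because each occurrence contributes exactly one offset. The quadratic blowup of Observation~\ref{obs:offset-size} is therefore avoided: we store each offset set once per character, not once per position.

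\textbf{Step 2 (Layer 1).} We precompute the powers $r_1^0,\dots,r_1^{m-1} \bmod p_1$ in $O(m)$ time. Then $\phi_1(\text{OffSet}_{\mathcal{S}}(\sigma))$ is a sum of $|\text{OffSet}_{\mathcal{S}}(\sigma)|$ table lookups, so all fingerprints cost $O(M)$ in total. We store them in an array $\psi[\sigma]$.

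\textbf{Step 3 (Layer 2).} For each position $i$ we compute $M_i=\sum_{\sigma\in\mathcal{S}[i]} r_2^{\psi[\sigma]} \bmod p_2$. This is the main obstacle. The exponents $\psi[\sigma]$ range over $\mathbb{Z}_{p_1}$, with $p_1$ polynomial in $m|\Sigma|$, so a power table is impossible. My plan is to compute $w_\sigma := r_2^{\psi[\sigma]}$ once per character, not once per occurrence, and reuse it at every position where $\sigma$ appears. There are at most $|\Sigma|=O(M)$ such values. Each one costs $O(\log p_1)=O(\log m + \log|\Sigma|)$ word operations by repeated squaring, and in the word-RAM with word size $\Theta(\log(m|\Sigma|))$ arithmetic modulo $p_2$ (which fits in $O(1)$ words) takes $O(1)$ per operation. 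To remove the logarithmic factor, I would try one of two routes. The first is to argue the exponentiation is $O(1)$ in the model where the paper counts hashing operations as constant. The second, more robust one, is to split $\psi[\sigma]$ into two halves of $\tfrac12\log p_1$ bits and precompute tables of $r_2^{a}$ and $r_2^{b\cdot 2^{h}}$ for $a,b<2^{h}\approx\sqrt{p_1}$. That is only polynomial-in-$\sqrt{m|\Sigma|}$ space, however, which exceeds $O(M)$ in general. So I would ultimately fall back on charging the exponentiation as $O(1)$ under the standard Karp--Rabin word-RAM convention. Once the $w_\sigma$ are known, each $M_i$ is a sum over $\mathcal{S}[i]$, so all of Layer~2 costs $O(M)$.

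Summing the phases for both strings gives $O(M_{\mathcal{S}_1}+M_{\mathcal{S}_2})$ time. For space, the stored items are the offset lists (total size $M$), the arrays $\psi$ and $w$ (size $O(|\Sigma|)$), the power table (size $O(m)$), and the mashed sequences (size $m$). All of these are $O(M_{\mathcal{S}_1}+M_{\mathcal{S}_2})$, which establishes the bound and, together with Theorem~\ref{thm:compare-correct}, Theorem~\ref{thm:comparison}.
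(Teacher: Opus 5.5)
Your proposal is correct and follows the paper's own proof. Both use the same phase-by-phase accounting: offset sets of total size $M$, Layer~1 linear in their sizes, Layer~2 linear in the set sizes, an $O(m)$ final comparison, and $m,|\Sigma|$ absorbed into the input size. The paper also implicitly charges each $r_2^{\psi[\sigma]}$ as $O(1)$, so your explicit fallback to that convention is no weaker than its argument.

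You also did not need to abandon your table route. Since $p_1$ is polynomial in $m|\Sigma|$, split $\psi[\sigma]$ into a suitable constant number $k$ of chunks rather than two. For example, $k=4$ when $p_1=\Theta(m^2|\Sigma|^2)$ gives tables of size $O(\sqrt{m|\Sigma|})=O(m+|\Sigma|)$. After linear preprocessing, each power then costs $O(1)$ time, which removes the reliance on that convention.
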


\begin{proof}
\textbf{Time Analysis:}
\begin{itemize}
\item \textbf{Initialization:} Selecting primes and initializing tables takes $O(|\Sigma|)$ time.
\item \textbf{Offset Set Computation:} By Lemma~\ref{lem:offset-computation}, computing the offset sets requires iterating through every character occurrence in both strings exactly once.
The time required is $O(M_{\mathcal{S}_1} + M_{\mathcal{S}_2})$.
\item \textbf{Layer 1 Hashing:} We calculate $\phi_1(O)$ for every character $\sigma \in \Sigma$.
The cost of computing the hash for a specific offset set $O$ is proportional to its size $|O|$.
Since the sum of the sizes of all offset sets is exactly the total number of character occurrences, the total time is $\sum_{\sigma} |O_{\sigma}|
= O(M_{\mathcal{S}_1} + M_{\mathcal{S}_2})$.
\item \textbf{Layer 2 Hashing:} We calculate $\phi_2$ for every position $i \in [1..m]$.
The cost at position $i$ is proportional to the number of characters in the set at that position, $|\mathcal{S}_1[i]|$ (or $|\mathcal{S}_2[i]|$).
Summing over all positions $i$, the total time is $\sum_{i=1}^{m} |\mathcal{S}_1[i]| + \sum_{i=1}^{m} |\mathcal{S}_2[i]| = O(M_{\mathcal{S}_1} + M_{\mathcal{S}_2})$.
\item \textbf{Comparison:} Comparing the two mashed sequences takes $O(m)$ time, which is dominated by $O(M_{\mathcal{S}_1} + M_{\mathcal{S}_2})$ since $m \le M$.
\end{itemize}
Since $|\Sigma| = O(M_{\mathcal{S}_1} + M_{\mathcal{S}_2})$ the total running time is $O(M_{\mathcal{S}_1} + M_{\mathcal{S}_2})$.
\textbf{Space Analysis:}
Storing the offset sets and the intermediate hash values requires space proportional to the total number of elements in the sets, which is $O(M_{\mathcal{S}_1} + M_{\mathcal{S}_2})$.
\end{proof}

\section{Proof of Complexity Analysis}

\begin{proof}
\textbf{Time Complexity:}
\begin{enumerate}
    \item \textbf{Preprocessing:} Building the occurrence lists $\mathcal{L}$ takes $O(N)$.
Computing the initial pattern hash takes $O(M)$.
    \item \textbf{Sliding Window Scan:} The loop runs for $n-m$ steps.
Inside the loop, the "Global Shift" (Step 2) takes $O(1)$ time.
The "Isolate" (Step 1) and "Reintegrate" (Step 3) phases iterate only over the set of active characters $S_{active} = \mathcal{T}[i] \cup \mathcal{T}[i+m]$.
Let $Cost(i)$ be the number of operations at step $i$. We have $Cost(i) = O(1 + |\mathcal{T}[i]| + |\mathcal{T}[i+m]|)$.
Summing over all windows:
    \[
    \text{Total Time} = \sum_{i=1}^{n-m} Cost(i) = \sum_{i=1}^{n-m} \left( O(1) + |\mathcal{T}[i]| + |\mathcal{T}[i+m]| \right)
    \]
    Observe that each character occurrence in the text at position $k$ appears in the set $\mathcal{T}[i]$ exactly once (when it leaves the window) and in the set $\mathcal{T}[i+m]$ exactly once (when it enters the window).
Therefore:
    \[
    \sum_{i=1}^{n-m} \left( |\mathcal{T}[i]| + |\mathcal{T}[i+m]| \right) \leq 2 \sum_{k=1}^{n} |\mathcal{T}[k]|
= 2N
    \]
    \item \textbf{Hash Updates:} All internal hash updates (multiplications, additions) and list lookups (finding next occurrence $\delta$) take $O(1)$ time.
\end{enumerate}
Thus, the total time complexity is $O(N + M)$.

\textbf{Space Complexity:}
The occurrence lists $\mathcal{L}$ store every character position, requiring $O(N)$ space.
The hash maps for $P_{stored}$ and $\psi$ store one entry per unique character currently in the window.
Since the number of unique characters cannot exceed the number of character occurrences, this is bounded by $O(N)$.
The pattern storage takes $O(M)$. Thus, the total space is $O(N + M)$.
\end{proof}

\section{Algorithms}

\begin{algorithm}
\caption{$\textsc{CompareSetStrings}(\mathcal{S}_1, \mathcal{S}_2)$}
\label{alg:compare}
\begin{algorithmic}[1]
\REQUIRE Set-strings $\mathcal{S}_1[1..m]$ and $\mathcal{S}_2[1..m]$ over alphabet $\Sigma$
\ENSURE \texttt{true} if $\mathcal{S}_1 \approx_p \mathcal{S}_2$, \texttt{false} otherwise (with high probability)
\STATE Choose prime $p_1 = \Theta((m^2|\Sigma|)^2)$ and random $r_1 \in \mathbb{Z}_{p_1}$
\STATE Choose prime $p_2 = \Theta((m^4|\Sigma|)^{2})$ and random $r_2 \in \mathbb{Z}_{p_2}$
\STATE Compute $\text{OffSet}_{\mathcal{S}_1}(\sigma)$ for all $\sigma \in \Sigma$ \COMMENT{Lemma~\ref{lem:offset-computation}}
\STATE Compute $\text{OffSet}_{\mathcal{S}_2}(\sigma)$ for all $\sigma \in \Sigma$
\FOR{$\sigma \in \Sigma$}
    \STATE $\psi_{\mathcal{S}_1}[\sigma] \gets \phi_1(\text{OffSet}_{\mathcal{S}_1}(\sigma))$ \COMMENT{Layer 1 hash}
    \STATE $\psi_{\mathcal{S}_2}[\sigma] \gets \phi_1(\text{OffSet}_{\mathcal{S}_2}(\sigma))$
\ENDFOR
\FOR{$i = 1$ to $m$}
    \STATE $M^{\mathcal{S}_1}_i \gets \phi_2(\{\psi_{\mathcal{S}_1}[\sigma] : \sigma \in \mathcal{S}_1[i]\})$ \COMMENT{Layer 2 hash}
    \STATE $M^{\mathcal{S}_2}_i \gets \phi_2(\{\psi_{\mathcal{S}_2}[\sigma] : \sigma 
\in \mathcal{S}_2[i]\})$
\ENDFOR
\RETURN $(M^{\mathcal{S}_1}_1, \ldots, M^{\mathcal{S}_1}_m) = (M^{\mathcal{S}_2}_1, \ldots, M^{\mathcal{S}_2}_m)$
\end{algorithmic}
\end{algorithm}

\begin{algorithm}
\caption{$\textsc{SetParameterizedMatching}(\mathcal{P}, \mathcal{T})$}
\label{alg:main}
\begin{algorithmic}[1]

\STATE \textbf{Preprocessing:}
\STATE Build occurrence lists $\mathcal{L}$ for text.
\STATE Compute pattern hash $H^{\mathcal{P}}$.
\STATE Compute $H^{\mathcal{T}}_1$ explicitly. Initialize $P_{stored}(\sigma)$ and $\psi[\sigma]$.
\STATE Set global multiplier $R_{glob} \gets 1$.
\STATE $Candidates \gets \{1\}$ if $H^{\mathcal{T}}_1 = H^{\mathcal{P}}$.

\STATE \textbf{Sliding Window:}
\FOR{$i = 1$ to $n - m$}
    \STATE $H_{stable} \gets H^{\mathcal{T}}_i$
    \STATE $S_{active} \gets \mathcal{T}[i] \cup \mathcal{T}[i+m]$

    \STATE \textbf{1.
Isolate (Remove Old Contributions):}
    \FOR{$\sigma \in S_{active}$}
        \STATE $P_{curr} \gets P_{stored}(\sigma) \cdot R_{glob} \bmod p_3$
        \STATE $H_{stable} \gets H_{stable} - (r_2^{\psi^{(i)}[\sigma]} \bmod p_2) \cdot P_{curr} \bmod p_3$
    \ENDFOR

    \STATE \textbf{2.
Global Shift:}
    \STATE $H_{stable} \gets H_{stable} \cdot r_3^{-1} \bmod p_3$
    \STATE $R_{glob} \gets R_{glob} \cdot r_3^{-1} \bmod p_3$

    \STATE \textbf{3.
Reintegrate (Update \& Add New Contributions):}
    \FOR{$\sigma \in S_{active}$}
        \STATE \textit{// Update Position Fingerprint}
        \IF{$\sigma \in \mathcal{T}[i]$} 
            \STATE $P_{stored}(\sigma) \gets P_{stored}(\sigma) - (r_3^0 \cdot R_{glob}^{-1})$ \COMMENT{Remove leaving position}
        \ENDIF
        \IF{$\sigma \in \mathcal{T}[i+m]$}
            \STATE $P_{stored}(\sigma) \gets P_{stored}(\sigma) + (r_3^{m-1} \cdot R_{glob}^{-1})$ \COMMENT{Add entering position}
    
    \ENDIF
        
        \STATE \textit{// Update Offset Structure (Layer 1)}
        \IF{$\sigma \in \mathcal{T}[i]$}
            \STATE Find next occurrence distance $\delta$ using $\mathcal{L}$.
\IF{$\delta \le m$}
                \STATE $\psi[\sigma] \gets (\psi[\sigma] - r_1^0) \cdot r_1^{-\delta} \bmod p_1$
            \ELSE
                \STATE $\psi[\sigma] \gets 0$ \COMMENT{Reset if no more occurrences}
            \ENDIF
        \ENDIF
        \IF{$\sigma \in \mathcal{T}[i+m]$}
       
     \STATE Determine relative offset $\delta_{new}$ of new occurrence.
\STATE $\psi[\sigma] \gets \psi[\sigma] + r_1^{\delta_{new}} \bmod p_1$
        \ENDIF

        \STATE $P_{curr} \gets P_{stored}(\sigma) \cdot R_{glob} \bmod p_3$
        \STATE $H_{stable} \gets H_{stable} + (r_2^{\psi[\sigma]} \bmod p_2) \cdot P_{curr} \bmod p_3$
    \ENDFOR

    \STATE $H^{\mathcal{T}}_{i+1} \gets H_{stable}$
    \IF{$H^{\mathcal{T}}_{i+1} = H^{\mathcal{P}}$}
       \STATE $\textit{Candidates} \gets \textit{Candidates} \cup \{i+1\}$
    \ENDIF
\ENDFOR

\RETURN $\textit{Candidates}$
\end{algorithmic}
\end{algorithm}

\end{document}